\documentclass[10pt,twocolumn,longbibliography]{article}
\usepackage{dcolumn}
\usepackage{bm}
\usepackage[utf8x]{inputenc}
\usepackage[english]{babel}
\usepackage[T1]{fontenc}
\usepackage{lmodern}
\usepackage{amsmath,amsfonts,amssymb,amsthm,bm,times,dcolumn}
\usepackage{mwe}
\usepackage{microtype}
\usepackage{braket}
\usepackage{mathrsfs}
\usepackage{nicematrix}
\usepackage{array}
\usepackage{tikz}
\usepackage{blkarray}
\usepackage[colorlinks={true}, citecolor={blue}, filecolor={blue}, linkcolor={blue}, urlcolor={blue}]{hyperref}
\usepackage{graphicx,epstopdf,color}
\usepackage{subfigure}
\usepackage[mathscr]{euscript}
\usepackage{setspace}
\usepackage{physics}
\usepackage{booktabs}
\newtheorem{theorem}{Theorem}

\newtheorem{definition}{Definition}
\newtheorem{remark}{Remark}
\newtheorem{lemma}{Lemma}
\newtheorem{proposition}{Proposition}

\begin{document}

\title{Automorphism-Induced Entanglement Bounds in Many-Body Systems}

\author{
Saikat Sur \\
\texttt{saikats@imsc.res.in} \\
Optics \& Quantum Information Group \\
The Institute of Mathematical Sciences, HBNI \\
CIT Campus, Taramani, Chennai 600113, India
}

\twocolumn[
\maketitle
\begin{abstract}
We derive an upper bound on the maximum balanced bipartite entanglement entropy of ground states of many-body Hamiltonians defined on a graph, agnostic to any particular model, that possesses a nontrivial automorphism group. We show that the entropy is bounded by the logarithm of a weighted sum of multiplicities of irreducible representations of the bipartition-preserving automorphism subgroup. This bound complements the known degeneracy-based bound, with neither universally dominating the other. For the complete graph 
$K_n$, the new bound yields an exponential improvement from linear to logarithmic scaling in the system size, consistent with the exact value of the entropy.
\end{abstract}
\vspace{0.5cm}
]

\maketitle

\section{Introduction}
\label{sec:intro}

\label{sec:intro}

Many-body ground states are natural
reservoirs of quantum entanglement.
In low-energy physics, ground-state
entanglement underlies phenomena such
as topological order, superconductivity,
and spin-liquid
phases~\cite{sachdev, lindgren},
making these states carriers of
non-classical correlations that
distinguish quantum phases of matter.
At the same time, such correlations
can be harnessed for quantum information
processing tasks including quantum
computing~\cite{loss_1998,
benjamin2003quantum},
simulation~\cite{georgescu_2014,
johnson2011quantum}, and
communication~\cite{wang, kimble2008,
bose}, provided the entanglement is
sufficiently robust and operationally
accessible~\cite{kurizki_2015_pnas}.

A central theme in many-body physics
is the relationship between the interactions and the
entanglement properties of low-energy
eigenstates. For systems defined on
lattices, a foundational
result is the area law. It states that the entanglement
entropy across a bipartition scales
with the boundary area rather than the
volume for gapped local
Hamiltonians~\cite{Eisert2010,
hastings2007, vidal_2003}.
Beyond the area law, the
entanglement structure depends on the
geometry and symmetry of the underlying
interaction. Translational and point symmetries have been shown to
constrain correlation functions and
entanglement spectra of
eigenstates~\cite{subrah-heisen,
gnatenko_pla_2021, jafari_cjp_2022}.
Graph-theoretic frameworks have found
 applications in describing quantum
correlations, particularly in the context of multipartite
entanglement in graph states ~\cite{hein_2004,
hein_2006}, measurement-based
quantum computation~\cite{raussendorf_prl_2001},
and  entanglement structure and dynamics in
many-body
dynamics~\cite{toth_pra_2012,
prakash2024, saikat_npj}.
However, the role of \emph{graph
automorphism symmetry}, the
combinatorial symmetry group of the
interaction graph, in constraining the
entanglement of many-body ground states
remains comparatively unexplored.

The entanglement entropy of a pure
state across a balanced bipartition is
bounded above by the Schmidt rank of
the ground state, which in turn is
bounded by the sum of Schmidt ranks
of the ground space basis vectors.
This yields a general \emph{degeneracy-based
bound} (Proposition~\ref{prop:general_bound}) This bound for a ground state
requires microscopic knowledge of the
ground state  manifold for Hamiltonians. But 
the central question we address is:
\emph{can the automorphism group of the
interaction graph be used to derive a
geometry-based upper bound on the
ground state entanglement entropy,
requiring no microscopic knowledge of
the ground states?} The answer is
affirmative. Our bound 
(Theorem~\ref{thm:main}) depends
only on the geometry of the graph and
the representation theory of its
automorphism group, and applies to any
Hamiltonian commuting with the
automorphism group, including Ising,
Heisenberg, $XY$, and Kitaev models.

 To make the framework concrete, we use
the antiferromagnetic Ising Hamiltonian and the XY model
as primary
examples throughout. The  ground space of the Ising model
is spanned by the max-cut configurations
of the underlying graph~\cite{Barahona1982,
Papadimitriou1991}. For the XY model embedded on a graph, the exact entropy can be computed from numerical methods. We show that the automorphism-based bound and the degeneracy-based bounds are complementary.
The degeneracy-based bound is tighter when the ground state degeneracy
is small, while the automorphism-based
bound is tighter when the automorphism
group is large. 
Beyond the main bound, we establish
that the entanglement entropy of the  degenerate ground state  symmetrized with respect to the graph symmetries is bounded by the Hilbert space dimension.

For asymmetric graphs, which constitute almost all large
graphs~\cite{erdos_1963, Cameron1999}
, Theorem~\ref{thm:main} is
inapplicable and
Proposition~\ref{prop:general_bound}
is the only available bound. We analyze
this regime separately in
Section~\ref{sec:asymmetric}, showing
that for classical Hamiltonians the
bound is controlled by the residual
entropy per spin, while for generic
quantum Hamiltonians the problem of
obtaining a non-trivial bound is
computationally intractable, at least
as hard as
QCMA~\cite{gharibian2024,
bouland2024, weggemans_2025}.

The article is organized as follows.
Section~\ref{sec:why_balanced_cut_vne}
introduces the balanced-cut von Neumann
entropy, derives the degeneracy-based
bound (Proposition~\ref{prop:general_bound}).
In Section~\ref{sec:graph_theory}, we introduce the graph automorphism
framework, establish the symmetry
structure of the ground state
(Lemmas~\ref{lem:psi_invariant}--\ref{lem:intertwine}),
apply Schur's lemma to derive
Theorem~\ref{thm:main}, and prove other results related to graph symmetries.
Section~\ref{sec:examples} works out
the examples of $C_n$, $K_n$ and $K_{3,3}$.
Section~\ref{sec:asymmetric} treats
asymmetric graphs.
Section~\ref{sec:conclusion} discusses
open problems and future directions.

\section{Balanced-cut von Neumann entropy as the primary measure}
\label{sec:why_balanced_cut_vne}

Consider a quantum network of $N$ qubits
arranged on a graph $\Lambda = (V, E)$,
where each vertex $v \in V$ carries a
qubit and each edge $(i,j) \in E$
represents an interaction. A 
bipartition $V = A \sqcup B$ divides the network
into two disjoint sets of qubits $A$
and $B$. The von Neumann
entropy $S_{N/2}(|\psi\rangle)$ across
this cut quantifies the bipartite
entanglement between the two disjoint regions of
the network for any pure state
$|\psi\rangle$. The von Neumann entropy across a
bipartition $A|B$ is defined as 
\begin{equation}
\label{eq:balanced_ee}
    S_{N/2}(|\psi\rangle)
    = -\mathrm{tr}\bigl(
    \rho_A\log\rho_A\bigr),
    \qquad
    \rho_A = \mathrm{tr}_B\bigl(
    |\psi\rangle\langle\psi|\bigr),
\end{equation}

For the balanced bipartition $|A| = |B| = N/2$, the measure
provides the largest possible bipartite entanglement capacity across cuts. The entanglement
entropy is bounded above by
$\frac{N}{2}\log 2$, a quantity that
grows extensively with system size $N$,
making it a thermodynamic
measure of entanglement capacity~\cite{nielsen_chaung,page_1993}. In contrast,  for
unequal bipartitions $|A| < |B|$, the
entropy is bounded by $|A|\log 2$,
which is limited by the size of the smaller subsystem.
The balanced cut therefore probes the
maximum bipartite entanglement capacity
of the network~\cite{horodecki_rmp_2009,vardoyan_2023},
in the sense that it maximizes the
potential Schmidt rank $\min(\dim
\mathcal{H}_A, \dim\mathcal{H}_B)
= 2^{N/2}$ over all bipartitions. Therefore it is the natural diagnostic of bipartite
entanglement structure in many-body
physics~\cite{Amico2008, Eisert2010}.
Its scaling distinguishes the dominant
phases: for gapped local Hamiltonians
the entropy satisfies an {area law}
$S = \mathcal{O}(|\partial A|)$~\cite{Eisert2010,
hastings2007}; at quantum critical
points it grows as $S = \mathcal{O}(\log
N)$~\cite{vidal_2003, Calabrese2004, rigol_nature_2008} in one dimension;
and in volume-law phases $S = \mathcal{O}(N)$.
These scalings underpin the efficiency
of tensor network methods such as
DMRG~\cite{White1992}, the diagnosis
of many-body
localization~\cite{Bauer2013, nandkishore_2015}, and
thermalization~\cite{DAlessio2016}.

\subsection{Setup}

Fix a balanced bipartition $V = A \sqcup B$ with
$|A| = |B| = N/2$ ($N$ even throughout; the odd case is
analogous with $|A| = \lfloor N/2 \rfloor$).
The Hilbert space factorizes as
$\mathcal{H} = \mathcal{H}_A \otimes \mathcal{H}_B$
with
$\mathcal{H}_A = \mathcal{H}_B
= (\mathbb{C}^2)^{\otimes N/2}$.
Any state $|\psi\rangle \in \mathcal{H}$ expands in the
computational basis as
\begin{equation}
\label{eq:psi_expand}
    |\psi\rangle
    = \sum_{\alpha}\sum_{\beta}
    M_{\alpha\beta}\,|\alpha\rangle_A|\beta\rangle_B,
\end{equation}
where $\alpha,\beta \in \{+1,-1\}^{N/2}$ label spin
configurations on $A$ and $B$ respectively, and
$M_{\alpha\beta} = \langle\alpha,\beta|\psi\rangle$
is the \emph{coefficient matrix}, a
$2^{N/2}\times 2^{N/2}$ complex matrix.
The singular value decomposition of $M$ yields the Schmidt
decomposition of $|\psi\rangle$: the Schmidt coefficients
$\{\lambda_k\}$ are the  singular values of $M$,
satisfying $\lambda_k \geq 0$, $\sum_k\lambda_k = 1$, and
\begin{equation}
\label{eq:entropy_schmidt}
    S_{N/2}(|\psi\rangle)
    = -\sum_k \lambda_k\log\lambda_k
    \leq \log r,
\end{equation}
where $r = \mathrm{rank}(M)$ is the Schmidt rank of
$|\psi\rangle$ across $A|B$~\cite{nielsen_chaung}.

Let $\mathcal{G}$ denote the ground space of the
Hamiltonian with orthonormal basis
$\{|g_k\rangle\}_{k=1}^{d}$, where $d$ is the ground
state degeneracy. Any state in $\mathcal{G}$ takes the
form
$|\psi\rangle = \sum_{k=1}^{d} c_k |g_k\rangle$
with $c_k \in \mathbb{C}$ and $\sum_k |c_k|^2 = 1$.
Define the coefficient matrices
\begin{equation}
\label{eq:Mk}
    M^{(k)}_{\alpha\beta}
    = \langle\alpha,\beta|g_k\rangle,
    \qquad k = 1,\ldots,d,
\end{equation}
so that the coefficient matrix of $|\psi\rangle$
decomposes as
\begin{equation}
\label{eq:M_decomp}
    M = \sum_{k=1}^{d} c_k M^{(k)}.
\end{equation}

\begin{lemma}
\label{lem:rank_subadd}
Let $|\psi\rangle \in \mathcal{G}$ with coefficient
matrix $M = \sum_{k=1}^d c_k M^{(k)}$ as
in~\eqref{eq:M_decomp}. Then the bound from rank subadditivity is given as
\begin{equation}
\label{eq:rank_bound}
    \mathrm{rank}(M)
    \leq \sum_{k=1}^{d} \mathrm{rank}(M^{(k)}).
\end{equation}
\end{lemma}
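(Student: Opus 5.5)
The plan is to reduce the statement to subadditivity of matrix rank and then iterate over the $d$ terms of~\eqref{eq:M_decomp}. The key fact is that for any two complex matrices $X,Y$ of the same size, $\mathrm{rank}(X+Y)\le \mathrm{rank}(X)+\mathrm{rank}(Y)$. I would prove this through column spaces. Every column of $X+Y$ is a sum of a column of $X$ and a column of $Y$. Hence $\mathrm{col}(X+Y)\subseteq \mathrm{col}(X)+\mathrm{col}(Y)$, and therefore
\[
\mathrm{rank}(X+Y)\le \dim\bigl(\mathrm{col}(X)+\mathrm{col}(Y)\bigr)\le \dim\mathrm{col}(X)+\dim\mathrm{col}(Y).
\]
The second inequality is the dimension formula for a sum of subspaces, with the dimension of the intersection dropped since it is nonnegative.

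Next I will handle the scalar coefficients $c_k$. Multiplying a matrix by a nonzero scalar does not change its column space, so $\mathrm{rank}(c_kM^{(k)})=\mathrm{rank}(M^{(k)})$ when $c_k\neq 0$. If $c_k=0$, then $\mathrm{rank}(c_kM^{(k)})=0\le \mathrm{rank}(M^{(k)})$. In both cases $\mathrm{rank}(c_kM^{(k)})\le \mathrm{rank}(M^{(k)})$.

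I then argue by induction on the number of terms. Set $M_j=\sum_{k=1}^{j}c_kM^{(k)}$, so that $M_j=M_{j-1}+c_jM^{(j)}$. Applying the two-matrix inequality and the scalar observation gives $\mathrm{rank}(M_j)\le \mathrm{rank}(M_{j-1})+\mathrm{rank}(M^{(j)})$. Unrolling this from $j=1$ to $j=d$ yields~\eqref{eq:rank_bound}. The bound holds for every choice of coefficients, including the normalized ones with $\sum_k|c_k|^2=1$ that describe states in $\mathcal{G}$.

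None of these steps is a genuine obstacle. The only point needing care is the justification of $\dim(U+W)\le\dim U+\dim W$. I would cite the dimension formula $\dim(U+W)=\dim U+\dim W-\dim(U\cap W)$, or argue directly that the union of a basis of $U$ and a basis of $W$ spans $U+W$.
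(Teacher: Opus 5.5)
Your proof is correct and follows the same route as the paper: rank subadditivity applied to $\sum_k c_k M^{(k)}$, followed by $\mathrm{rank}(c_kM^{(k)})\le\mathrm{rank}(M^{(k)})$. You are in fact slightly more careful than the paper, which writes the equality $\sum_k\mathrm{rank}(c_kM^{(k)})=\sum_k\mathrm{rank}(M^{(k)})$ as if every $c_k\neq 0$, whereas you note that a vanishing coefficient only strengthens the inequality.
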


\begin{proof}
Subadditivity of matrix rank gives
$\mathrm{rank}(\sum_k c_k M^{(k)})
\leq \sum_k \mathrm{rank}(c_k M^{(k)})
= \sum_k \mathrm{rank}(M^{(k)})$,
where the last equality holds since $c_k \neq 0$
scaling does not change rank.
\end{proof}

\begin{proposition}
\label{prop:general_bound}
Let $r_k = \mathrm{rank}(M^{(k)})$ for $k=1,\ldots,d$.
For any state $|\psi\rangle$ in the ground space
$\mathcal{G}$, the general entropy bound is
\begin{equation}
\label{eq:general_bound}
    S_{N/2}(|\psi\rangle)
    \leq \log \sum_{k=1}^{d} r_k.
\end{equation}
\end{proposition}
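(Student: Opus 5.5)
The plan is to chain the entropy–Schmidt-rank inequality~\eqref{eq:entropy_schmidt} with the rank subadditivity of Lemma~\ref{lem:rank_subadd}, and then use that $\log$ is monotone.

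\textbf{Step 1.} Fix an arbitrary normalized $|\psi\rangle\in\mathcal{G}$ and expand it in the orthonormal basis, $|\psi\rangle=\sum_{k=1}^d c_k|g_k\rangle$. By linearity of $\langle\alpha,\beta|\cdot\rangle$, its coefficient matrix across the balanced cut $A|B$ is $M=\sum_k c_k M^{(k)}$, as in~\eqref{eq:M_decomp}.

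\textbf{Step 2.} The Schmidt coefficients of $|\psi\rangle$ are determined by the singular values of $M$, and the number of nonzero ones is $r=\mathrm{rank}(M)$. The entropy is a Shannon entropy of a probability distribution supported on $r$ points, so it is maximized by the uniform distribution. This gives $S_{N/2}(|\psi\rangle)\le\log\mathrm{rank}(M)$, which is~\eqref{eq:entropy_schmidt}.

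\textbf{Step 3.} Apply Lemma~\ref{lem:rank_subadd} to get $\mathrm{rank}(M)\le\sum_k r_k$. There is one subtlety: the lemma's proof divides by $c_k\neq0$. For terms with $c_k=0$, I would simply drop them. The inequality $\mathrm{rank}(\sum_k c_kM^{(k)})\le\sum_{k:\,c_k\neq0}r_k\le\sum_k r_k$ still holds, because $\mathrm{rank}(c_kM^{(k)})\le r_k$ for every $k$, with equality or zero.

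\textbf{Step 4.} Since $\log$ is monotone increasing, combining Steps 2 and 3 yields~\eqref{eq:general_bound}. Note that $\sum_k r_k\ge1$, since $|\psi\rangle\neq0$ forces some $M^{(k)}\neq0$, so the logarithm is well-defined.

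None of these steps is a real obstacle. The only care needed is the zero-coefficient case in Step 3, and the normalization convention in Step 2. For the latter, the entropy should be taken in terms of the squared singular values $\lambda_k=\sigma_k^2$, which sum to one. The rank bound is insensitive to which convention is used, so the argument goes through either way.
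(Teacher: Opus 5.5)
Your proof is correct and follows the paper's argument: combine $S_{N/2}(|\psi\rangle)\le\log\mathrm{rank}(M)$ from~\eqref{eq:entropy_schmidt} with the rank subadditivity of Lemma~\ref{lem:rank_subadd}, then use monotonicity of $\log$. Your two additions do not change the argument. They are the treatment of zero coefficients $c_k$, and taking the Schmidt weights as squared singular values so that they sum to one. Both tidy up loose points in the paper's presentation.
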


\begin{proof}
From~\eqref{eq:entropy_schmidt} and
Lemma~\ref{lem:rank_subadd},
\begin{equation}
    S_{N/2}(|\psi\rangle)
    \leq \log r
    \leq \log \sum_{k=1}^{d} r_k,
\end{equation}
where the second inequality follows from
$r \leq \sum_k r_k$ and monotonicity of $\log$.
\end{proof}

\begin{remark}
\label{rem:nontrivial}
The bound~\eqref{eq:general_bound} is non-trivial
precisely when $\sum_k r_k < 2^{N/2}$, i.e.\ when
the rank subadditivity constraint is tighter
than the Hilbert space dimension.
The quantity $\sum_k r_k$ requires independent
microscopic knowledge of the ground states and cannot
be inferred from the entropy alone. For graphs with
trivial automorphism group $\mathrm{Aut}(\Lambda) = \{e\}$,
which we call \emph{asymmetric graphs},
Proposition~\ref{prop:general_bound} generically
reduces to the trivial bound; we treat this class
separately in Section~\ref{sec:asymmetric}. A
non-trivial improvement can be obtained by exploiting
automorphism symmetry, as established in the next 
Section.
\end{remark}

\section{Graph Automorphisms and Ground State Entanglement}
\label{sec:graph_theory}

Let $\Lambda = (V, E)$ be a finite graph on $N$ vertices with edge set $E$.
The \emph{automorphism group} of $\Lambda$ is~\cite{godsil_royle, grohe_10}
\begin{equation}
\label{eq:G_def}
    G = \mathrm{Aut}(\Lambda)
    = \bigl\{g : V \to V \;\big|\;
    (i,j) \in E \Leftrightarrow (g(i),g(j)) \in E\bigr\},
\end{equation}
the group of all vertex permutations preserving the edge structure.
Each $g \in G$ acts on spin configurations by permuting sites,
inducing a unitary operator on the Hilbert space
\begin{equation}
\label{eq:Ug_def}
    U(g)|\mathbf{s}\rangle = |g \cdot \mathbf{s}\rangle,
    \qquad \mathbf{s} \in \{+1,-1\}^N.
\end{equation}
Since $g$ preserves the edge set, it commutes with any Hamiltonian
$H(\Lambda)$ built from interactions along edges of $\Lambda$,
\begin{equation}
\label{eq:commute}
    [H(\Lambda),\, U(g)] = 0,
    \qquad \forall\, g \in G.
\end{equation}

Given a balanced bipartition $V = A \sqcup B$ with $|A| = |B| = N/2$,
we restrict to those automorphisms that preserve the bipartition,
\begin{equation}
\label{eq:GammaA_def}
    \Gamma_A = \{g \in G : g(A) = A\}.
\end{equation}
This is a subgroup of $G$: the identity preserves $A$; if
$g, h \in \Gamma_A$ then $g \circ h$ and $g^{-1}$ also preserve $A$. Since $V = A \sqcup B$ and $g$ is a bijection on $V$,
the condition $g(A) = A$ automatically implies $g(B) = B$.
The reason for restricting to $\Gamma_A$ rather than the full $G$ is
the factorization property: for $g \in \Gamma_A$, since $g$ maps
$A$-sites to $A$-sites and $B$-sites to $B$-sites independently,
the unitary $U(g)$ factorizes as
\begin{equation}
\label{eq:U_factorize}
    U(g) = P_A(g) \otimes P_B(g),
    \qquad g \in \Gamma_A,
\end{equation}
where $P_A(g)$ and $P_B(g)$ are the unitary permutation
representations induced by $g$ on $\mathcal{H}_A$ and
$\mathcal{H}_B$ respectively,
\begin{equation}
\label{eq:PA_PB_def}
    P_A(g)|\alpha\rangle_A = |g(\alpha)\rangle_A,
    \qquad
    P_B(g)|\beta\rangle_B = |g(\beta)\rangle_B.
\end{equation}
Here $g(\alpha)$ denotes the spin configuration on $A$ obtained
by the site permutation $g$: if
$\alpha = (s_{i_1}, \ldots, s_{i_{N/2}})$ with
$A = \{i_1, \ldots, i_{N/2}\}$, then
$[g(\alpha)]_k = s_{g^{-1}(i_k)}$.

\subsection{Orbits and symmetry-inequivalent
configurations}
The full automorphism group $G$ acts on the
complete spin configuration space $\{0,1\}^N$,
partitioning it into orbits
\begin{equation}
    \{0,1\}^N = \mathcal{C}_1 \sqcup \mathcal{C}_2
    \sqcup \cdots \sqcup \mathcal{C}_\omega,
\end{equation}
where $\omega$ denotes the number of $G$-orbits
on $\{0,1\}^N$. Restricting to the bipartition,
$\Gamma_A \leq G$ acts on the set of
$A$-configurations $\{0,1\}^{N/2}$, partitioning
it into orbits
\begin{equation}
\label{eq:A_orbits}
    \{0,1\}^{N/2}
    = \mathcal{O}_1^A \sqcup \mathcal{O}_2^A
    \sqcup \cdots \sqcup \mathcal{O}_{\omega_A}^A,
\end{equation}
where $\omega_A$ denotes the number of
$\Gamma_A$-orbits on $\{0,1\}^{N/2}$.
The number of orbits $\omega_A$ is given by
Burnside's lemma~\cite{Burnside1897,Cameron1999},
\begin{equation}
\label{eq:burnside}
    \omega_A = \frac{1}{|\Gamma_A|}
    \sum_{g \in \Gamma_A}
    \bigl|\{\alpha \in \{0,1\}^{N/2} :
    g(\alpha) = \alpha\}\bigr|,
\end{equation}
i.e., the number of orbits equals the average
number of $A$-configurations fixed by an element
of $\Gamma_A$.

\subsection{Symmetrized ground states}
The orbit-averaged states are given by the following form
\begin{equation}
\label{eq:A1_basis}
    |\phi_k\rangle
    = \frac{1}{\sqrt{|\mathcal{C}_k|}}
    \sum_{\mathbf{s} \in \mathcal{C}_k}
    |\mathbf{s}\rangle,
    \qquad k = 1, \ldots, \omega,
\end{equation}
transform in the trivial irreducible
representation $A_1$ of $G$. Since the orbits
$\mathcal{C}_1,\ldots,\mathcal{C}_\omega$ are
disjoint, the states $|\phi_k\rangle$ have
disjoint support in the computational basis and
are therefore orthonormal. They span the
trivial-sector subspace of $\mathcal{H}$ as
\begin{equation}
\label{eq:HA1_def}
    \mathcal{H}_{A_1}(\Lambda)
    = \mathrm{span}\bigl\{
    |\phi_k\rangle :
    k = 1, \ldots, \omega\bigr\},
    \qquad
    \dim\mathcal{H}_{A_1}(\Lambda) = \omega.
\end{equation}

\begin{lemma}
\label{lem:psi_invariant}
Every state $|\psi\rangle \in
\mathcal{H}_{A_1}(\Lambda)$ satisfies
\begin{equation}
    U(g)|\psi\rangle = |\psi\rangle
    \qquad \forall\, g \in G.
\end{equation}
\end{lemma}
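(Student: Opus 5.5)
By linearity it suffices to show that each basis vector $|\phi_k\rangle$ of \eqref{eq:HA1_def} is fixed by every $U(g)$, $g\in G$. A general $|\psi\rangle=\sum_k a_k|\phi_k\rangle$ is then fixed too, because each $U(g)$ is a linear operator.

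Fix $g\in G$ and an orbit $\mathcal{C}_k$. By the definition \eqref{eq:Ug_def}, $U(g)|\phi_k\rangle = |\mathcal{C}_k|^{-1/2}\sum_{\mathbf{s}\in\mathcal{C}_k}|g\cdot\mathbf{s}\rangle$. The key step is to check that the map $\mathbf{s}\mapsto g\cdot\mathbf{s}$ is a bijection of $\mathcal{C}_k$ onto itself.

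First, it maps $\mathcal{C}_k$ into itself. If $\mathbf{s}=h\cdot\mathbf{s}_0$ for a representative $\mathbf{s}_0$ and some $h\in G$, then $g\cdot\mathbf{s}=(gh)\cdot\mathbf{s}_0$ with $gh\in G$. Here one needs that $\mathbf{s}\mapsto g\cdot\mathbf{s}$, with $[g\cdot\mathbf{s}]_i=s_{g^{-1}(i)}$ as in the paper's convention, is a genuine group action, i.e.\ $(gh)\cdot\mathbf{s}=g\cdot(h\cdot\mathbf{s})$. This is a one-line check, and it is the only place where any care is needed.

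Second, it is injective, since it has inverse $\mathbf{s}\mapsto g^{-1}\cdot\mathbf{s}$. Because $\mathcal{C}_k$ is finite, it is therefore a bijection.

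Reindexing the sum by $\mathbf{s}'=g\cdot\mathbf{s}$ then gives $U(g)|\phi_k\rangle=|\phi_k\rangle$, which completes the argument. I expect no real obstacle: the only subtlety is confirming the action convention, and this ensures that orbits are $G$-invariant sets.
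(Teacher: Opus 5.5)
Your proposal is correct and follows essentially the same route as the paper: reduce to the orbit states $|\phi_k\rangle$ by linearity, observe that $\mathbf{s}\mapsto g\cdot\mathbf{s}$ is a bijection of each $G$-orbit $\mathcal{C}_k$ onto itself, and reindex the sum. Your explicit check that the convention $[g\cdot\mathbf{s}]_i=s_{g^{-1}(i)}$ is a genuine left action is left implicit in the paper, and it does not change the argument.
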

\begin{proof}
Any $|\psi\rangle \in \mathcal{H}_{A_1}(\Lambda)$
takes the form
$|\psi\rangle = \sum_{k=1}^\omega
c_k|\phi_k\rangle$.
For any $g \in G$, since $\mathcal{C}_k$ is a
$G$-orbit, the map
$\mathbf{s} \mapsto g\cdot\mathbf{s}$ is a
bijection from $\mathcal{C}_k$ to itself.
Therefore
\begin{equation}
    U(g)|\phi_k\rangle
    = \frac{1}{\sqrt{|\mathcal{C}_k|}}
    \sum_{\mathbf{s} \in \mathcal{C}_k}
    |g\cdot\mathbf{s}\rangle
    = |\phi_k\rangle.
\end{equation}
Since each basis state is individually
invariant, so is any
$|\psi\rangle = \sum_k c_k|\phi_k\rangle$.
\end{proof}

{ 
\noindent\textit{Physical motivation for the
symmetrized ground state.}
Since $[H(\Lambda), U(g)] = 0$ for all $g \in G$, the Gibbs state
$\rho_\beta \propto e^{-\beta H(\Lambda)}$ satisfies
\begin{equation}
    U(g)\rho_\beta U(g)^\dagger = \rho_\beta
    \qquad \forall\,g \in G,\beta \geq 0
\end{equation}
as an exact algebraic identity, since
$U(g)e^{-\beta H(\Lambda)}U(g)^\dagger
= e^{-\beta H(\Lambda)}$ follows immediately from
$[H(\Lambda),U(g)]=0$. As $\beta\to\infty$,
$\rho_\beta$ converges to the normalized projector
onto the ground space,
\begin{equation}
    \rho_0 = \frac{1}{d}\,\Pi_{\mathcal{G}(\Lambda)},
\end{equation}
which inherits $G$-invariance from the Gibbs state
at all finite temperatures.

For the non-degenerate case ($d=1$), 
the ground state $|\psi_0\rangle$ is already
a pure $G$-invariant pure state. It satisfies
$U(g)|\psi_0\rangle = e^{i\phi(g)}|\psi_0\rangle$
for some phase $e^{i\phi(g)}$. Since this phase
factors out of the intertwining condition of
Lemma~\ref{lem:intertwine} and does not affect
the block structure of $M$,
Theorem~\ref{thm:main} applies directly without
any further assumption.

For $d > 1$,
$\rho_0$ is a mixed state representing a statistical
ensemble over the ground space rather than a single
pure state. The $G$-invariance of the mixed state $\rho_0$ does
not imply that the pure states in any decomposition
of $\rho_0$ are individually $G$-invariant, 
a uniform mixture over an orbit of $G$ is
$G$. One might therefore attempt to work directly with
the mixed state $\rho_0$ and derive an entanglement
bound from its $G$-invariance. However, as we show
in Theorem~\ref{thm:mixed_bound}, applying
Schur's lemma to the reduced density matrix
$R = \mathrm{tr}_B(\rho_0)$ yields only the trivial
bound
$    S_{N/2}(\rho_0)
    \leq \tfrac{N}{2}\log 2$, 
which coincides with the Hilbert space bound.
 }

\begin{lemma}
\label{lem:M_symmetry}
For any $|\psi\rangle \in \mathcal{G}_{A_1}(\Lambda)$ and
any $g \in \Gamma_A$, the coefficient matrix $M$ satisfies
\begin{equation}
\label{eq:M_sym}
    M_{g(\alpha),\, g(\beta)} = M_{\alpha\beta}
    \qquad \forall\, \alpha, \beta \in \{+1,-1\}^{N/2}.
\end{equation}
That is, $M$ is constant on the joint $(\alpha,\beta)$-orbits
of $\Gamma_A$.
\end{lemma}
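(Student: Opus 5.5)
The plan is to combine the invariance of Lemma~\ref{lem:psi_invariant} with the factorization~\eqref{eq:U_factorize}. I read $\mathcal{G}_{A_1}(\Lambda)$ as the intersection of the ground space with $\mathcal{H}_{A_1}(\Lambda)$, i.e.\ the $G$-invariant ground states. Then any $|\psi\rangle\in\mathcal{G}_{A_1}(\Lambda)$ satisfies $U(g)|\psi\rangle=|\psi\rangle$ for all $g\in G$, and in particular for all $g\in\Gamma_A\le G$. For such $g$, $U(g)=P_A(g)\otimes P_B(g)$, and by~\eqref{eq:PA_PB_def} this acts on product basis states as
\begin{equation}
U(g)|\alpha\rangle_A|\beta\rangle_B=|g(\alpha)\rangle_A|g(\beta)\rangle_B .
\end{equation}

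Next I substitute the expansion~\eqref{eq:psi_expand}:
\begin{equation}
U(g)|\psi\rangle=\sum_{\alpha,\beta}M_{\alpha\beta}\,|g(\alpha)\rangle_A|g(\beta)\rangle_B .
\end{equation}
I then reindex with $\alpha'=g(\alpha)$ and $\beta'=g(\beta)$. This is legitimate because $g$ restricted to $A$ (resp.\ $B$) is a bijection of $A$ (resp.\ $B$), so $\alpha\mapsto g(\alpha)$ is a bijection of $\{+1,-1\}^{N/2}$, and likewise for $\beta$. The sum becomes $\sum_{\alpha',\beta'}M_{g^{-1}(\alpha'),g^{-1}(\beta')}|\alpha'\rangle|\beta'\rangle$. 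Comparing coefficients with $|\psi\rangle=\sum M_{\alpha'\beta'}|\alpha'\rangle|\beta'\rangle$ in the orthonormal product basis gives
\begin{equation}
M_{g^{-1}(\alpha),\,g^{-1}(\beta)}=M_{\alpha\beta}
\end{equation}
for all $g\in\Gamma_A$. Since $\Gamma_A$ is a group, replacing $g$ by $g^{-1}$ yields~\eqref{eq:M_sym}.

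Finally, the statement that $M$ is constant on joint orbits follows directly. If $(\alpha',\beta')=(h(\alpha),h(\beta))$ for some $h\in\Gamma_A$, then $M_{\alpha'\beta'}=M_{\alpha\beta}$.

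The only delicate point is bookkeeping, not substance. One must check that the action $[g(\alpha)]_k=s_{g^{-1}(i_k)}$ is a genuine left action, so that the reindexing and the inversion are consistent. One must also confirm that the same $g$ acts on both tensor factors, which is exactly what membership in $\Gamma_A$ guarantees. If one instead allowed phases $U(g)|\psi\rangle=e^{i\phi(g)}|\psi\rangle$, the argument would give $M_{g(\alpha),g(\beta)}=e^{-i\phi(g)}M_{\alpha\beta}$. The invariance in Lemma~\ref{lem:psi_invariant} removes this phase.
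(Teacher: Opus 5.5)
Your proposal is correct and follows essentially the same route as the paper: invariance from Lemma~\ref{lem:psi_invariant}, the factorization $U(g)=P_A(g)\otimes P_B(g)$, reindexing the sum, and comparing coefficients in the product basis. The only difference is the last step. The paper substitutes $\alpha'=g(\alpha)$, $\beta'=g(\beta)$ directly rather than replacing $g$ by $g^{-1}$, which avoids needing to check that $\alpha\mapsto g(\alpha)$ is a genuine left action.
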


\begin{proof}
From Lemma~\ref{lem:psi_invariant} and the
factorization~\eqref{eq:U_factorize},
\begin{eqnarray}
    &&\sum_{\alpha,\beta} M_{\alpha\beta}
    |g(\alpha)\rangle_A |g(\beta)\rangle_B
    = U(g)|\psi\rangle = |\psi\rangle \nonumber\\
    &=& \sum_{\alpha,\beta} M_{\alpha\beta}
    |\alpha\rangle_A|\beta\rangle_B.
\end{eqnarray}
Substituting $\alpha' = g(\alpha)$, $\beta' = g(\beta)$
on the left and comparing coefficients in the orthonormal
basis $\{|\alpha'\rangle_A|\beta'\rangle_B\}$ gives
$M_{g^{-1}(\alpha'), g^{-1}(\beta')} = M_{\alpha'\beta'}$.
Replacing $\alpha' \to g(\alpha)$, $\beta' \to g(\beta)$
yields~\eqref{eq:M_sym}.
\end{proof}

\begin{lemma}
\label{lem:intertwine}
For any $|\psi\rangle \in \mathcal{G}_{A_1}(\Lambda)$ and
any $g \in \Gamma_A$, the coefficient matrix $M$ satisfies
\begin{equation}
\label{eq:intertwine}
    P_A(g)\, M = M\, P_B(g).
\end{equation}
\end{lemma}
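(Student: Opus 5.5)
The plan is to derive the intertwining relation~\eqref{eq:intertwine} directly from the entrywise invariance of Lemma~\ref{lem:M_symmetry}, by writing both sides of~\eqref{eq:intertwine} as matrices in the computational bases of $\mathcal{H}_A$ and $\mathcal{H}_B$ and comparing entries. The only care needed is in fixing the index conventions for the permutation matrices. From~\eqref{eq:PA_PB_def}, the matrix elements are
\begin{equation}
    \bigl(P_A(g)\bigr)_{\alpha'\alpha} = \delta_{\alpha',\,g(\alpha)},
    \qquad
    \bigl(P_B(g)\bigr)_{\beta\beta'} = \delta_{\beta,\,g(\beta')}.
\end{equation}
Since $g$ acts as a bijection on configurations, $\delta_{\alpha',g(\alpha)} = \delta_{g^{-1}(\alpha'),\alpha}$.

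First I will compute the left-hand side entrywise: $(P_A(g)M)_{\alpha'\beta} = \sum_\alpha \delta_{\alpha',g(\alpha)} M_{\alpha\beta} = M_{g^{-1}(\alpha'),\beta}$. Next, for the right-hand side, $(MP_B(g))_{\alpha\beta'} = \sum_\beta M_{\alpha\beta}\,\delta_{\beta,g(\beta')} = M_{\alpha,\,g(\beta')}$. It therefore suffices to show, for all $\alpha,\beta$,
\begin{equation}
    M_{g^{-1}(\alpha),\,\beta} = M_{\alpha,\,g(\beta)}.
\end{equation}
This follows by applying Lemma~\ref{lem:M_symmetry} to the pair $(g^{-1}(\alpha),\beta)$, which is legitimate because $g\in\Gamma_A$ and $|\psi\rangle\in\mathcal{G}_{A_1}(\Lambda)$. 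The lemma gives $M_{g(g^{-1}(\alpha)),\,g(\beta)} = M_{g^{-1}(\alpha),\beta}$, that is, $M_{\alpha,g(\beta)} = M_{g^{-1}(\alpha),\beta}$, as required.

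As a cross-check, I will also give the coordinate-free version. Under the isomorphism $|\alpha\rangle_A|\beta\rangle_B \leftrightarrow$ matrix $M$, the operator $X\otimes Y$ acts as $M\mapsto XMY^{T}$. Invariance $U(g)|\psi\rangle=|\psi\rangle$ together with the factorization~\eqref{eq:U_factorize} then reads $P_A(g)\,M\,P_B(g)^T = M$. Because $P_B(g)$ is a real permutation matrix, it is orthogonal, so $P_B(g)^T = P_B(g)^{-1}$. Right-multiplying by $P_B(g)$ yields~\eqref{eq:intertwine}.

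The main obstacle is purely bookkeeping: keeping track of whether $g$ or $g^{-1}$ appears, and of the transpose in the vectorization identity. A sign or inverse slip would produce $P_A(g)M = MP_B(g^{-1})$, which is the same statement read with $g$ replaced by $g^{-1}$, and so is still true because $\Gamma_A$ is a group. If needed, I will invoke this closure of $\Gamma_A$ under inverses to absorb any such discrepancy.
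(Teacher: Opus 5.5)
Your main argument is correct and is essentially the paper's own proof. It uses the same entrywise computation $[P_A(g)M]_{\alpha\beta}=M_{g^{-1}(\alpha),\beta}$ and $[MP_B(g)]_{\alpha\beta}=M_{\alpha,g(\beta)}$, closed by applying Lemma~\ref{lem:M_symmetry} at $(g^{-1}(\alpha),\beta)$. Your vectorization cross-check $P_A(g)MP_B(g)^{T}=M$ is also valid, and it even bypasses Lemma~\ref{lem:M_symmetry}.

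Your closing remark, however, is wrong and should be dropped. Replacing $g\to g^{-1}$ in $P_A(g)M=MP_B(g^{-1})$ gives $P_A(g^{-1})M=MP_B(g)$, which is not the target. A one-sided inverse slip yields a genuinely different relation: combined with the true one, it would force $M\bigl(P_B(g)^{2}-I\bigr)=0$, which fails in general. So group closure cannot absorb it; fortunately your actual derivation does not need that rescue.
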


\begin{proof}
We compute both sides entry by entry.

\medskip
\noindent\textit{Left side.}
By matrix multiplication,
\begin{equation}
    [P_A(g)M]_{\alpha\beta}
    = \sum_{\alpha''} [P_A(g)]_{\alpha\alpha''}\,
    M_{\alpha''\beta}.
\end{equation}
Since $P_A(g)$ is the permutation matrix induced by $g$
on $\mathcal{H}_A$, its entries are
\begin{equation}
    [P_A(g)]_{\alpha\alpha''}
    = \langle\alpha|P_A(g)|\alpha''\rangle
    = \langle\alpha|g(\alpha'')\rangle
    = \delta_{\alpha,\, g(\alpha'')},
\end{equation}
which equals unity if and only if $\alpha'' = g^{-1}(\alpha)$
and vanishes otherwise. Therefore only one term survives
the sum,
\begin{equation}
\label{eq:lhs}
    [P_A(g)M]_{\alpha\beta}
    = M_{g^{-1}(\alpha),\,\beta}.
\end{equation}

\medskip
\noindent\textit{Right side.}
By matrix multiplication,
\begin{equation}
    [MP_B(g)]_{\alpha\beta}
    = \sum_{\beta''} M_{\alpha\beta''}\,
    [P_B(g)]_{\beta''\beta}.
\end{equation}
Similarly, the entries of $P_B(g)$ are
\begin{equation}
    [P_B(g)]_{\beta''\beta}
    = \langle\beta''|P_B(g)|\beta\rangle
    = \langle\beta''|g(\beta)\rangle
    = \delta_{\beta'',\, g(\beta)},
\end{equation}
which equals unity if and only if $\beta'' = g(\beta)$
and vanishes otherwise. Therefore
\begin{equation}
\label{eq:rhs}
    [MP_B(g)]_{\alpha\beta}
    = M_{\alpha,\, g(\beta)}.
\end{equation}

\medskip
\noindent\textit{Equality.}
From Lemma~\ref{lem:M_symmetry},
$M_{g(\alpha), g(\beta)} = M_{\alpha\beta}$
for all $\alpha, \beta$ and all $g \in \Gamma_A$.
Substituting $\alpha \mapsto g^{-1}(\alpha)$,
\begin{equation}
    M_{g(g^{-1}(\alpha)),\, g(\beta)}
    = M_{g^{-1}(\alpha),\, \beta}
    \quad\Rightarrow\quad
    M_{\alpha,\, g(\beta)}
    = M_{g^{-1}(\alpha),\, \beta}.
\end{equation}
Comparing with~\eqref{eq:lhs} and~\eqref{eq:rhs},
\begin{equation}
    [P_A(g)M]_{\alpha\beta}
    = M_{g^{-1}(\alpha),\,\beta}
    = M_{\alpha,\, g(\beta)}
    = [MP_B(g)]_{\alpha\beta}
\end{equation}
for all $\alpha, \beta$. Hence $P_A(g)M = MP_B(g)$.
\end{proof}

\subsection{Irreducible decomposition}

Equation~\eqref{eq:intertwine} states that $M$ intertwines
the permutation representation of $\Gamma_A$ on
$\mathcal{H}_A$ with that on $\mathcal{H}_B$. We exploit
this to bound $\mathrm{rank}(M)$ via the irreducible
decomposition of these representations. The collection $\{P_A(g) : g \in \Gamma_A\}$ forms a
unitary permutation representation of $\Gamma_A$ on
$\mathcal{H}_A$. It decomposes into irreducible
representations as
\begin{equation}
\label{eq:HA_decomp}
    \mathcal{H}_A
    = \bigoplus_\mu
    \left(\mathbb{C}^{m_\mu^A} \otimes \mathcal{V}_\mu\right),
    \qquad
    \sum_\mu d_\mu m_\mu^A = 2^{N/2},
\end{equation}
where $\mu$ runs over the distinct irreps of $\Gamma_A$,
$\mathcal{V}_\mu$ is the carrier space of irrep $\mu$ with
dimension $d_\mu$, and $m_\mu^A$ is the multiplicity with
which irrep $\mu$ appears in $\mathcal{H}_A$.
Similarly, $\{P_B(g) : g \in \Gamma_A\}$ is a unitary
permutation representation of $\Gamma_A$ on $\mathcal{H}_B$,
decomposing as
\begin{equation}
\label{eq:HB_decomp}
    \mathcal{H}_B
    = \bigoplus_\mu
    \left(\mathbb{C}^{m_\mu^B} \otimes \mathcal{V}_\mu\right),
    \qquad
    \sum_\mu d_\mu m_\mu^B = 2^{N/2},
\end{equation}
where $m_\mu^B$ is the multiplicity of irrep $\mu$ in
$\mathcal{H}_B$. The index $\mu$ runs over the same set
of irreps of $\Gamma_A$ in both decompositions; but some
multiplicities may be zero in either $\mathcal{H}_A$
or $\mathcal{H}_B$.

\begin{definition}
\label{def:sym_basis}
Let $\Gamma_A$ act on $\mathcal{H}_A$ via the unitary
permutation representation $\{P_A(g) : g \in \Gamma_A\}$.
Let $\hat{\Gamma}_A$ denote the set of equivalence classes
of irreducible representations of $\Gamma_A$. For each
$\mu \in \hat{\Gamma}_A$, let $\mathcal{V}_\mu$ be the
carrier space of irrep $\mu$ with dimension $d_\mu$, and
let $P_\mu(g)$ denote the restriction of $P_A(g)$ to
$\mathcal{V}_\mu$. The character of irrep $\mu$ at
$g \in \Gamma_A$ is
\begin{equation}
\label{eq:character}
    \chi_\mu(g) = \mathrm{tr}(P_\mu(g)).
\end{equation}
The orthogonal projector onto the $\mu$-sector of
$\mathcal{H}_A$ is~\cite{Serre1977, unger_2006}
\begin{equation}
\label{eq:proj}
    \Pi_\mu^A = \frac{d_\mu}{|\Gamma_A|}
    \sum_{g \in \Gamma_A} \chi_\mu(g)^*\, P_A(g),
\end{equation}
satisfying $\Pi_\mu^A|\mu,i,e_a\rangle
= |\mu,i,e_a\rangle$ and
$\Pi_\mu^A|\nu,j,e_b\rangle = 0$ for $\nu \neq \mu$.
The \emph{symmetry-adapted basis} of $\mathcal{H}_A$
consists of states
\begin{equation}
\label{eq:sym_basis}
    |\mu, i, e_a\rangle,
    \qquad
    \mu \in \hat{\Gamma}_A,
    \quad
    i = 1,\ldots,m_\mu^A,
    \quad
    a = 1,\ldots,d_\mu,
\end{equation}
where $i$ labels the $m_\mu^A$ independent copies of
sector $\mu$ in $\mathcal{H}_A$, and $e_a$ labels a
basis of $\mathcal{V}_\mu$. These states form a complete
orthonormal basis of $\mathcal{H}_A$,
\begin{equation}
\label{eq:completeness}
    \sum_{\mu \in \hat{\Gamma}_A}
    \sum_{i=1}^{m_\mu^A}
    \sum_{a=1}^{d_\mu}
    |\mu, i, e_a\rangle
    \langle\mu, i, e_a|
    = I_{\mathcal{H}_A},
\end{equation}
and analogously for $\mathcal{H}_B$ with multiplicities
$m_\mu^B$ and basis states $|\mu, j, e_b\rangle$,
$j = 1,\ldots,m_\mu^B$, $b = 1,\ldots,d_\mu$.
\end{definition}

\begin{remark}
\label{rem:three_labels}
The three labels $(\mu, i, e_a)$ correspond
to three physically distinct and mutually commuting
sets of observables. The sector label $\mu$ is
determined by the graph geometry through $\Gamma_A$.
The multiplicity label $i$ is determined by
additional physical conserved quantities of
$H(\Lambda)$ that are independent of the geometry.
The internal label $e_a$ is determined by the
action of $\Gamma_A$ within the irrep $\mu$.
Since $\Pi_\mu^A$ is a linear combination of
$P_A(g)$, it commutes with all $P_A(g)$ and hence
with $O_i$. Within sector $\mu$, $O_i$
acts on the multiplicity space $\mathbb{C}^{m_\mu^A}$
and $O_{e_a}$ acts on the internal space
$\mathcal{V}_\mu$ and operators on different tensor
factors commute trivially. The three observables
therefore commute pairwise and can be simultaneously
diagonalized.
\end{remark}

\begin{lemma}
\label{lem:schur}
Let $M$ be the coefficient matrix satisfying the
intertwining condition~\eqref{eq:intertwine}, and
let $M_{\mu\nu} = \Pi_\mu^A\, M\, \Pi_\nu^B$ denote
the block of $M$ mapping the $\nu$-sector of
$\mathcal{H}_B$ to the $\mu$-sector of $\mathcal{H}_A$.
Then:

\noindent(a) For $\mu \neq \nu$, the off-diagonal blocks vanish
\begin{equation}
\label{eq:schur_offdiag}
    M_{\mu\nu} = 0.
\end{equation}

\noindent(b)  Within the $\mu$-block, the coefficient matrix $M$ is block diagonal
\begin{equation}
\label{eq:schur_diag}
    M_{\mu\mu} = \Phi_\mu \otimes I_{d_\mu},
\end{equation}
where $\Phi_\mu$ is an $m_\mu^A \times m_\mu^B$
matrix acting on the multiplicity indices, and
$I_{d_\mu}$ is the identity on $\mathcal{V}_\mu$.
In terms of basis matrix elements,
\begin{equation}
    \langle\mu,{i},e_a|\,
    M_{\mu\mu}\,
    |\mu,{j},e_b\rangle
    = (\Phi_\mu)_{\mathbf{i}\mathbf{j}}\,
    \delta_{ab}.
\end{equation}

 \noindent(c) The rank of the coefficient Matrix is
\begin{equation}
\label{eq:schur_rank}
    \mathrm{rank}(M)
    = \sum_\mu d_\mu\,\mathrm{rank}(\Phi_\mu)
    \leq \sum_\mu d_\mu\,
    \min(m_\mu^A,\, m_\mu^B).
\end{equation}
\end{lemma}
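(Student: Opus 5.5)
The plan is to reduce the statement to Schur's lemma applied copy by copy, using the intertwining relation $P_A(g)M=MP_B(g)$ of Lemma~\ref{lem:intertwine} for all $g\in\Gamma_A$, with $M$ viewed as a linear map $\mathcal{H}_B\to\mathcal{H}_A$.

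\textbf{Step 1: compatible bases.} I first fix the symmetry-adapted bases of Definition~\ref{def:sym_basis} compatibly on the two sides. For each irrep $\mu$ I choose one unitary matrix realization $D_\mu(g)$ on $\mathcal{V}_\mu$. I then choose every copy $|\mu,i,e_a\rangle$ in $\mathcal{H}_A$, and every copy $|\mu,j,e_b\rangle$ in $\mathcal{H}_B$, so that
\begin{equation}
  P(g)|\mu,i,e_a\rangle=\sum_{a'}D_\mu(g)_{a'a}|\mu,i,e_{a'}\rangle .
\end{equation}
This is always possible because any two copies of the same irrep are unitarily equivalent. Without this common realization, part (b) would only hold up to a change of basis, so I regard this choice as part of the statement.

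\textbf{Step 2: Schur's lemma on each pair of copies.} For fixed $\mu,i$ and $\nu,j$, let $E^A_{\mu,i}$ be the isometric embedding of $\mathcal{V}_\mu$ as the $i$-th copy in $\mathcal{H}_A$, and define $E^B_{\nu,j}$ analogously. Consider
\begin{equation}
  T=(E^A_{\mu,i})^\dagger\, M\, E^B_{\nu,j}:\mathcal{V}_\nu\to\mathcal{V}_\mu .
\end{equation}
The projectors $\Pi^A_\mu$ and $\Pi^B_\nu$ in~\eqref{eq:proj} are linear combinations of the $P(g)$, so they commute with the representations and with $M$'s intertwining property. Consequently $T$ satisfies
\begin{equation}
  D_\mu(g)\,T=T\,D_\nu(g)\qquad\forall\, g\in\Gamma_A .
\end{equation}
Schur's lemma then gives two cases. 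If $\mu\neq\nu$, then $T=0$; summing over all copies yields $M_{\mu\nu}=\Pi^A_\mu M\Pi^B_\nu=0$, which is part (a). If $\mu=\nu$, then $T=(\Phi_\mu)_{ij}I_{d_\mu}$ for some scalar $(\Phi_\mu)_{ij}$. Collecting these scalars into the $m_\mu^A\times m_\mu^B$ matrix $\Phi_\mu$ gives the matrix-element formula and $M_{\mu\mu}=\Phi_\mu\otimes I_{d_\mu}$, which is part (b).

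\textbf{Step 3: the rank.} By completeness~\eqref{eq:completeness} and part (a), $M=\bigoplus_\mu M_{\mu\mu}$ is block diagonal with respect to orthogonal decompositions of domain and codomain. Hence the rank is additive over blocks. Using $\mathrm{rank}(X\otimes Y)=\mathrm{rank}(X)\,\mathrm{rank}(Y)$, each block contributes $d_\mu\,\mathrm{rank}(\Phi_\mu)$. Finally $\mathrm{rank}(\Phi_\mu)\le\min(m_\mu^A,m_\mu^B)$ because $\Phi_\mu$ has size $m_\mu^A\times m_\mu^B$. This gives~\eqref{eq:schur_rank}.

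\textbf{Main obstacle.} I expect the delicate point to be Step 1. It requires making the basis-matching convention explicit, and checking that the intertwiner $M$ genuinely relates the same representation $D_\mu$ on both sides rather than its complex conjugate. This is harmless here: the $P(g)$ are real permutation matrices, and Lemma~\ref{lem:intertwine} already has $M$ intertwining $P_B$ itself, not $P_B^{T}$ or $\overline{P_B}$.
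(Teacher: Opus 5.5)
Your proposal is correct and follows the same route as the paper. Schur's lemma kills the blocks $M_{\mu\nu}$ with $\mu\neq\nu$ and forces $M_{\mu\mu}$ to act as a scalar on $\mathcal{V}_\mu$; rank additivity over the orthogonal blocks and $\mathrm{rank}(\Phi_\mu\otimes I_{d_\mu})=d_\mu\,\mathrm{rank}(\Phi_\mu)$ then give part (c).

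Your copy-by-copy use of Schur's lemma, through the equivariant embeddings $E^A_{\mu,i}$, $E^B_{\nu,j}$ and a common unitary realization $D_\mu$, makes explicit what the paper leaves implicit. The paper applies Schur's lemma directly to isotypic components and writes $I\otimes P_\mu(g)$ on both sides. One small correction: the relation $D_\mu(g)T=T\,D_\nu(g)$ follows from the equivariance of these embeddings, not from the projectors commuting with $P(g)$.
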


\begin{proof}
\textit{(a)}
The intertwining condition~\eqref{eq:intertwine}
gives $P_A(g)M = MP_B(g)$ for all $g \in \Gamma_A$.
Projecting on the left by $\Pi_\mu^A$ and on the
right by $\Pi_\nu^B$,
\begin{equation}
    \Pi_\mu^A P_A(g)\, M_{\mu\nu}
    = M_{\mu\nu}\, P_B(g)\Pi_\nu^B
    \qquad \forall\, g \in \Gamma_A.
\end{equation}
Since $\Pi_\mu^A$ commutes with $P_A(g)$ and
projects onto the $\mu$-sector, $M_{\mu\nu}$
intertwines irrep $\nu$ of $\Gamma_A$ on
$\mathcal{H}_B$ with irrep $\mu$ of $\Gamma_A$
on $\mathcal{H}_A$. By Schur's
lemma~\cite{Serre1977}, any intertwiner between
two inequivalent irreducible representations is
zero. Hence $M_{\mu\nu} = 0$ for $\mu \neq \nu$.

\medskip
\noindent\textit{ (b)}
For $\mu = \nu$, $M_{\mu\mu}$ maps
$\mathbb{C}^{m_\mu^B} \otimes \mathcal{V}_\mu$
to $\mathbb{C}^{m_\mu^A} \otimes \mathcal{V}_\mu$,
satisfying
\begin{equation}
    (I_{m_\mu^A} \otimes P_\mu(g))\,
    M_{\mu\mu}
    = M_{\mu\mu}\,
    (I_{m_\mu^B} \otimes P_\mu(g))
    \qquad \forall\, g \in \Gamma_A.
\end{equation}
By Schur's lemma~\cite{Serre1977}, any intertwiner
of an irreducible representation with itself is a
scalar multiple of the identity on that representation
space. Therefore $M_{\mu\mu}$ acts as a scalar on
$\mathcal{V}_\mu$, with the scalar depending only on
the multiplicity indices $i,j$. That is,
\begin{equation}
    \langle\mu, i, e_a|\,
    M_{\mu\mu}\,
    |\mu, j, e_b\rangle
    = (\Phi_\mu)_{ij}\,\delta_{ab},
\end{equation}
for some $m_\mu^A \times m_\mu^B$ matrix $\Phi_\mu$.
In operator form, $M_{\mu\mu} = \Phi_\mu \otimes I_{d_\mu}$.

\medskip
\noindent\textit{ (c)}
Since the blocks $\{M_{\mu\mu}\}$ are mutually
orthogonal by part (a),
\begin{eqnarray}
    &&\mathrm{rank}(M)
    = \sum_\mu \mathrm{rank}(M_{\mu\mu}) 
    = \sum_\mu \mathrm{rank}
    (\Phi_\mu \otimes I_{d_\mu}) \nonumber\\
    &&= \sum_\mu d_\mu\,\mathrm{rank}(\Phi_\mu).
\end{eqnarray}
Since $\Phi_\mu$ is an $m_\mu^A \times m_\mu^B$
matrix, $\mathrm{rank}(\Phi_\mu)
\leq \min(m_\mu^A, m_\mu^B)$, giving
\begin{equation}
    \mathrm{rank}(M)
    \leq \sum_\mu d_\mu\,
    \min(m_\mu^A,\, m_\mu^B).
\end{equation}
\end{proof}

\begin{theorem}
\label{thm:main}
Let $\Lambda$ be a graph on $N$ vertices,
$G = \mathrm{Aut}(\Lambda)$, and $H(\Lambda)$
a Hamiltonian commuting with $U(g)$ for all
$g \in G$. Let $V = A \sqcup B$ be a balanced
bipartition with $|A| = |B| = N/2$, and let
$\Gamma_A = \{g \in G : g(A) = A\}$.
For any state $|\psi\rangle \in
\mathcal{G}_{A_1}(\Lambda)$,
\begin{equation}
\label{eq:main_bound}
    S_{N/2}(|\psi\rangle)
    \leq \log\!\left(
    \sum_\mu d_\mu\,
    \min(m_\mu^A,\, m_\mu^B)
    \right),
\end{equation}
where the sum is over irreps $\mu$ of $\Gamma_A$,
$d_\mu = \dim\mathcal{V}_\mu$, and $m_\mu^A$,
$m_\mu^B$ are the multiplicities of irrep $\mu$
in $\mathcal{H}_A$ and $\mathcal{H}_B$
respectively.
\end{theorem}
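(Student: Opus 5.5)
The plan is to chain together results already established. Fix $|\psi\rangle \in \mathcal{G}_{A_1}(\Lambda)$ and write its coefficient matrix $M$ as in~\eqref{eq:psi_expand}.

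First, Lemma~\ref{lem:psi_invariant} gives that $|\psi\rangle$ is invariant under every $U(g)$ with $g \in G$, and hence under every $g \in \Gamma_A \leq G$. Since such $g$ preserve $A$, the factorization~\eqref{eq:U_factorize} holds. Lemma~\ref{lem:M_symmetry} then gives orbit-constancy of $M$, and Lemma~\ref{lem:intertwine} upgrades this to the intertwining relation $P_A(g)M = MP_B(g)$ for all $g\in\Gamma_A$.

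Second, I would decompose $\mathcal{H}_A$ and $\mathcal{H}_B$ into isotypic sectors as in~\eqref{eq:HA_decomp}--\eqref{eq:HB_decomp}. With respect to the symmetry-adapted bases of Definition~\ref{def:sym_basis}, I would invoke Lemma~\ref{lem:schur}. Part (a) kills the off-diagonal blocks $M_{\mu\nu}$ for $\mu\neq\nu$. Part (b) gives $M_{\mu\mu} = \Phi_\mu\otimes I_{d_\mu}$. Part (c) then yields
\begin{equation*}
\mathrm{rank}(M) = \sum_\mu d_\mu\,\mathrm{rank}(\Phi_\mu) \le \sum_\mu d_\mu \min(m_\mu^A, m_\mu^B).
\end{equation*}
A point to check here is that the sector projectors satisfy $\sum_\mu \Pi_\mu^A = I$ and $\sum_\nu \Pi_\nu^B = I$. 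This holds because they are the central idempotents of the group algebra acting through a unitary representation. It makes $M = \sum_{\mu,\nu} M_{\mu\nu}$ exact, so $M=\bigoplus_\mu M_{\mu\mu}$. The ranks add because the blocks have mutually orthogonal row spaces and mutually orthogonal column spaces.

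Finally, I would combine this with the Schmidt-rank bound~\eqref{eq:entropy_schmidt}, $S_{N/2}(|\psi\rangle)\le\log\mathrm{rank}(M)$, and use monotonicity of $\log$ to obtain~\eqref{eq:main_bound}.

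The main obstacle is the Schur step (b) for non-real irreps. I would address it by working over $\mathbb{C}$ with the unitary representations $P_A$ and $P_B$, so that Schur's lemma applies in its standard form. I would also choose the identification of the $\mathcal{V}_\mu$ factors in $\mathcal{H}_A$ and $\mathcal{H}_B$ to be the same fixed model of irrep $\mu$. Only then is the intertwiner literally $\Phi_\mu\otimes I$; otherwise it is $\Phi_\mu\otimes T_\mu$ with $T_\mu$ invertible, which gives the same rank and so the same bound.

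A second subtlety is the non-degenerate case, where $U(g)|\psi\rangle = e^{i\phi(g)}|\psi\rangle$ with a one-dimensional character. There the intertwining reads $P_A(g)M = e^{i\phi(g)}MP_B(g)$, and the theorem's bound with the same $\mu$-pairing does not follow directly. The statement as written, however, is for $\mathcal{G}_{A_1}$, so the trivial-character case suffices and the above argument completes the proof.
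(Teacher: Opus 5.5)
Your proposal is correct and follows the paper's proof essentially step for step. Lemma~\ref{lem:psi_invariant} together with Lemmas~\ref{lem:M_symmetry}--\ref{lem:intertwine} gives $P_A(g)M = MP_B(g)$, Lemma~\ref{lem:schur}(c) bounds $\mathrm{rank}(M)$, and $S_{N/2}\le\log\mathrm{rank}(M)$ with monotonicity of $\log$ finishes. Your additional checks are all sound. These are the completeness of the isotypic projectors, the $\Phi_\mu\otimes T_\mu$ form under a different choice of irrep models, and the observation that a nontrivial one-dimensional character would shift the $\mu$-pairing, so that only the trivial-character case is covered. The last point is more careful than the paper's own remark on the non-degenerate case.
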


\begin{proof}
From Lemma~\ref{lem:psi_invariant} and
Lemma~\ref{lem:intertwine}, any
$|\psi\rangle \in \mathcal{G}_{A_1}(\Lambda)$
has coefficient matrix $M$ satisfying
$P_A(g)M = MP_B(g)$ for all $g \in \Gamma_A$.
By Lemma~\ref{lem:schur}(c),
\begin{equation}
    \mathrm{rank}(M)
    \leq \sum_\mu d_\mu\,
    \min(m_\mu^A,\, m_\mu^B).
\end{equation}
By Proposition~\ref{prop:general_bound}
and~\eqref{eq:entropy_schmidt},
$S_{N/2}(|\psi\rangle) \leq \log\,
\mathrm{rank}(M)$. Combining and applying
monotonicity of $\log$ gives~\eqref{eq:main_bound}.
\end{proof}

\begin{remark}
\label{rem:complementarity}
Proposition~\ref{prop:general_bound} and
Theorem~\ref{thm:main} are complementary.
Proposition~\ref{prop:general_bound} gives
the bound $S_{N/2} \leq \log\sum_k r_k$
in terms of the Schmidt ranks of individual
ground space basis vectors, requiring
microscopic knowledge of the ground states.
Theorem~\ref{thm:main} gives the bound
$S_{N/2} \leq \log\sum_\mu d_\mu
\min(m_\mu^A, m_\mu^B)$ entirely in terms
of the representation theory of $\Gamma_A$,
which is determined by the graph geometry
alone. When $\mathrm{Aut}(\Lambda) = \{e\}$,
the group $\Gamma_A$ is trivial, only the
trivial irrep appears, and
Theorem~\ref{thm:main} reduces to the
Hilbert space bound; we treat this case  separately
in Section~\ref{sec:asymmetric}.
\end{remark}

\begin{remark}
\label{rem:sandwich}
The block diagonal structure of
Lemma~\ref{lem:schur} yields not only
the upper bound of Theorem~\ref{thm:main}
but also a complementary lower bound.
The matrix $M$ is block diagonal with
blocks $M_{\mu\mu} = \Phi_\mu \otimes
I_{d_\mu}$. Let
$\hat{\mu} = \{\mu : \Phi_\mu \neq 0\}$
denote the set of irreps with non-zero
contribution to the ground state. The
entanglement entropy satisfies the
 bound
\begin{equation}
\label{eq:sandwich}
    \log \sum_{\mu \in \hat{\mu}} d_\mu
    \leq S_{N/2}(|\psi\rangle)
    \leq \log \sum_{\mu \in \hat{\mu}}
    d_\mu \min(m_\mu^A, m_\mu^B).
\end{equation}

The lower bound is achieved when each
block is minimally entangled, i.e.\
$\mathrm{rank}(\Phi_\mu) = 1$ for all
$\mu \in \hat{\mu}$. In this case each
block $\mu$ contributes exactly $d_\mu$
equal singular values to the Schmidt
decomposition. The upper bound is achieved
when each block is maximally entangled,
i.e.\ $\mathrm{rank}(\Phi_\mu) =
\min(m_\mu^A, m_\mu^B)$ for all
$\mu \in \hat{\mu}$, recovering
Theorem~\ref{thm:main}. The upper bound is tighter for
Hamiltonians whose ground state lies
in the symmetric sector
$\mathcal{G}_{A_1}(\Lambda)$, where
$\hat{\mu} = \{A_1\}$ and
$d_{A_1} = 1$ renders the lower bound
trivial zero while the upper
bound $\log \min(m_{A_1}^A, m_{A_1}^B)$
can be non-trivial. Conversely, the
lower bound is non-trivial when
$\Gamma_A$ is non-abelian and the
ground state has support in
high-dimensional irreps $\rho$ with
$d_\rho > 1$, a regime that arises for
non-diagonal Hamiltonians on highly
symmetric graphs. The examples in
Section~\ref{sec:examples} illustrate
the former regime; but finding
examples with non-trivial lower bounds
for physically relevant Hamiltonians
is left as an open problem.
\end{remark}

\begin{theorem}
\label{thm:mixed_bound}
Let $H(\Lambda)$ commute with $U(g)$ for all
$g \in G$, and let
$\rho_0 = \frac{1}{d}\Pi_{\mathcal{G}(\Lambda)}$
be the $G$-invariant mixed ground state.
Then the entanglement entropy of $\rho_0$
across the balanced bipartition $A|B$ satisfies
\begin{equation}
\label{eq:mixed_bound}
    S_{N/2}(\rho_0) \leq \frac{N}{2}\log 2.
\end{equation}
\end{theorem}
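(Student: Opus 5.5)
The plan is to interpret $S_{N/2}(\rho_0)$ as the von Neumann entropy of the reduced state $R = \mathrm{tr}_B(\rho_0)$ on $\mathcal{H}_A$, i.e.\ $S_{N/2}(\rho_0) = -\mathrm{tr}(R\log R)$, the natural extension of~\eqref{eq:balanced_ee} to mixed states. The bound~\eqref{eq:mixed_bound} then follows from the elementary fact that the entropy of any density operator on a space of dimension $D$ is at most $\log D$, with $D = \dim\mathcal{H}_A = 2^{N/2}$. I will nevertheless organize the argument through Schur's lemma, to show that the symmetry gives nothing sharper in this setting, as the statement is meant to illustrate.

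\emph{Step 1: $R$ is a density operator on $\mathcal{H}_A$.} Since $\rho_0 = \Pi_{\mathcal{G}(\Lambda)}/d$ is positive semidefinite with unit trace, the partial trace $R = \mathrm{tr}_B(\rho_0)$ is also positive semidefinite with $\mathrm{tr}R = 1$.

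\emph{Step 2: $R$ commutes with $P_A(g)$.} Take $g \in \Gamma_A$. Then $U(g) = P_A(g)\otimes P_B(g)$ by~\eqref{eq:U_factorize}. Since $[H(\Lambda),U(g)]=0$, the ground projector satisfies $U(g)\Pi_{\mathcal{G}}U(g)^\dagger = \Pi_{\mathcal{G}}$. Cyclicity of the partial trace over the $B$ factor then gives
\begin{equation}
P_A(g)\,R\,P_A(g)^\dagger = \mathrm{tr}_B\bigl[(P_A(g)\otimes P_B(g))\rho_0(P_A(g)\otimes P_B(g))^\dagger\bigr] = R .
\end{equation}

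\emph{Step 3: apply Schur's lemma as in Lemma~\ref{lem:schur}.} Using the decomposition~\eqref{eq:HA_decomp}, Schur's lemma (the analogue of Lemma~\ref{lem:schur}(a),(b) with both representations equal to $P_A$) gives $R = \bigoplus_\mu R_\mu\otimes I_{d_\mu}$, where each $R_\mu \geq 0$ is an $m_\mu^A\times m_\mu^A$ matrix. Unlike the pure-state case, this constrains only the degeneracy pattern of the eigenvalues of $R$, not its rank. Each $R_\mu$ can be full rank, so $\mathrm{rank}(R) \le \sum_\mu d_\mu m_\mu^A = 2^{N/2}$. The multiplicity bound has become $m_\mu^A$ rather than a $\min(m_\mu^A, m_\mu^B)$ that is strictly smaller, because $R$ maps $\mathcal{H}_A$ to itself.

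\emph{Step 4: bound the entropy.} The entropy is at most the log of the rank, via concavity (equivalently, Jensen's inequality applied to the eigenvalues of $R$). Hence
\begin{equation}
S_{N/2}(\rho_0) \le \log \mathrm{rank}(R) \le \log 2^{N/2} = \tfrac{N}{2}\log 2 .
\end{equation}

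The mathematics here is routine; the main point requiring care is conceptual. The quantity $S_{N/2}(\rho_0)$ must be defined as the entropy of $\mathrm{tr}_B\rho_0$, and for a mixed state this is not an entanglement measure proper. One should also explain why the symmetry gives nothing beyond the dimension bound: the relevant intertwining is between $P_A$ and itself, so the multiplicity spaces are square and unconstrained. If a sharper remark is desired, I would add that $\mathrm{rank}(R) \le \min(2^{N/2}, d\cdot 2^{N/2})$, which does not improve the bound.
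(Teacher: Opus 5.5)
Your proposal is correct and follows the paper's proof essentially step for step: $G$-invariance of $\rho_0$ gives $[R,P_A(g)]=0$ for $g\in\Gamma_A$, Schur's lemma yields $R=\bigoplus_\mu R_\mu\otimes I_{d_\mu}$, and then $\mathrm{rank}(R)\le\sum_\mu d_\mu m_\mu^A=2^{N/2}$ combined with $S_{N/2}(\rho_0)\le\log\mathrm{rank}(R)$ gives the bound. As you observe, the symmetry step is inessential because the result is just the dimension bound for any density operator on $\mathcal{H}_A$; note also that your closing remark $\mathrm{rank}(R)\le\min(2^{N/2},d\cdot 2^{N/2})$ is vacuous, since $d\ge 1$ makes the minimum always equal to $2^{N/2}$.
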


\begin{proof}

\noindent 
(a) Let $R = \mathrm{tr}_B(\rho_0)$ be the reduced
density matrix on $\mathcal{H}_A$. Since
$\rho_0$ is $G$-invariant, for any
$g \in \Gamma_A$ with
$U(g) = P_A(g)\otimes P_B(g)$
\begin{equation}
    [P_A(g)\otimes P_B(g)]\,\rho_0\,
    [P_A(g)^\dagger\otimes P_B(g)^\dagger]
    = \rho_0.
\end{equation}
Taking the partial trace over $B$ of both sides
and using the identity
$\mathrm{tr}_B([A\otimes U]\rho[A^\dagger\otimes
U^\dagger]) = A\,\mathrm{tr}_B(\rho)\,A^\dagger$
for any unitary $U$, we have
\begin{equation}
    P_A(g)\,R\,P_A(g)^\dagger = R
    \qquad \forall\,g\in\Gamma_A,
\end{equation}
equivalently $[R, P_A(g)] = 0$ for all
$g\in\Gamma_A$. That is, $R$ lies in the
commutant of the representation
$\{P_A(g) : g\in\Gamma_A\}$ on $\mathcal{H}_A$.

\medskip
\noindent (b) By the irrep decomposition of $\mathcal{H}_A$
under $\Gamma_A$ in Eq.~\eqref{eq:HA_decomp}. Since $[R,P_A(g)]=0$ for all $g\in\Gamma_A$,
Schur's lemma~\cite{Serre1977} implies that $R$
acts as a scalar on each irrep carrier space
$\mathcal{V}_\mu$ and as an arbitrary Hermitian
positive semidefinite operator on the multiplicity
space $\mathbb{C}^{m_\mu^A}$. Therefore
\begin{equation}
\label{eq:R_block}
    R = \bigoplus_\mu
    \left(R_\mu\otimes I_{d_\mu}\right),
\end{equation}
where $R_\mu$ is an $m_\mu^A\times m_\mu^A$
Hermitian positive semidefinite matrix on
$\mathbb{C}^{m_\mu^A}$, and $I_{d_\mu}$ is the
identity on $\mathcal{V}_\mu$.

\medskip
\noindent (c)
Since the blocks in~\eqref{eq:R_block} are
mutually orthogonal and
$\log(R_\mu\otimes I_{d_\mu})
= (\log R_\mu)\otimes I_{d_\mu}$ the entanglement entropy is
From~\eqref{eq:R_block} is
\begin{equation}
    \mathrm{rank}(R)
    = \sum_\mu d_\mu\,\mathrm{rank}(R_\mu)
    \leq \sum_\mu d_\mu m_\mu^A
    = \dim\mathcal{H}_A = 2^{N/2},
\end{equation}
since $\mathrm{rank}(R_\mu)\leq m_\mu^A$ and
$\sum_\mu d_\mu m_\mu^A = \dim\mathcal{H}_A$
by the dimension identity~\eqref{eq:HA_decomp}.
Since $\mathrm{tr}(R)=1$ and $R$ is positive
semidefinite, the standard bound
$S_{N/2}(\rho_0)\leq\log\,\mathrm{rank}(R)$
gives
\begin{equation}
    S_{N/2}(\rho_0)
    \leq \log 2^{N/2}
    = \frac{N}{2}\log 2.
\end{equation}
Interestingly, this bound involves only $m_\mu^A$ and not
$m_\mu^B$. This is because $R$ is a square
operator on $\mathcal{H}_A$ alone: the block
$R_\mu$ is an $m_\mu^A\times m_\mu^A$ square
matrix with no constraint from $\mathcal{H}_B$.
In contrast, for a pure state $|\psi\rangle$,
the coefficient matrix $M$ maps $\mathcal{H}_B$
to $\mathcal{H}_A$, and the Schur block
$\Phi_\mu$ is an $m_\mu^A\times m_\mu^B$
rectangular matrix whose rank is bounded by
$\min(m_\mu^A, m_\mu^B)$.
\end{proof}

\begin{remark}
\label{rem:mixed_tight}
The bound~\eqref{eq:mixed_bound} is generically
tight for a generic degenerate ground space,
$\mathrm{tr}_B(\Pi_{\mathcal{G}})$ has full rank
$2^{N/2}$ on $\mathcal{H}_A$, giving
$\mathrm{rank}(R_\mu) = m_\mu^A$ for all $\mu$
and $S_{N/2}(\rho_0) = \frac{N}{2}\log 2$.
This confirms that the non-trivial automorphism
bound of Theorem~\ref{thm:main} requires both
$G$-invariance and purity simultaneously.
$G$-invariance alone gives only the Hilbert
space bound, and purity alone gives only the
degeneracy-based bound of
Proposition~\ref{prop:general_bound}. The
intersection of both conditions is necessary
and sufficient for the non-trivial bound.

The qualitative
difference between the pure
state bound of Theorem~\ref{thm:main} and the
mixed state bound of Theorem~\ref{thm:mixed_bound}
 arises from the structure of the blocks $\Phi_\mu$ and $R_\mu$. For a pure state
the coefficient matrix $M$ defined by
$|\psi\rangle = \sum_{\alpha,\beta}
M_{\alpha\beta}|\alpha\rangle_A|\beta\rangle_B$
is a map from $\mathcal{H}_B$ to $\mathcal{H}_A$.
The intertwining condition $P_A(g)M = MP_B(g)$
makes $M$ a rectangular
$m_\mu^A\times m_\mu^B$ matrix.

For the mixed state $\rho_0$, the reduced
density matrix $R = \mathrm{tr}_B(\rho_0)$
is an operator on $\mathcal{H}_A$ alone.
The commutant condition $[R,P_A(g)]=0$ makes $R$ a square
$m_\mu^A\times m_\mu^A$ matrix with no
constraint from $m_\mu^B$. Its rank is bounded
only by $m_\mu^A$, giving the trivial Hilbert
space bound of Theorem~\ref{thm:mixed_bound}.
\end{remark}

\section{Examples}
\label{sec:examples}

We compute the exact balanced entanglement entropy
of the ground state for three standard graph families
and compare with both Proposition~\ref{prop:general_bound}
and Theorem~\ref{thm:main}. We consider the
antiferromagnetic Ising Hamiltonian
\begin{equation}
\label{eq:ising}
    H(\Lambda) = \sum_{(i,j)\in E} \sigma_i^z \sigma_j^z,
\end{equation}
where $\sigma_i^z$ denotes the Pauli-$z$ matrix at
site $i$. In the computational basis, the
joint eigenbasis of all $\sigma_i^z$, are spanned by
states $|\mathbf{s}\rangle = |s_1,\ldots,s_N\rangle$
with $s_i \in \{+1,-1\}$. Each configuration
$\mathbf{s} \in \{\pm1\}^N$ is an eigenstate of
$H(\Lambda)$ with eigenvalue
$E(\mathbf{s}) = \sum_{(i,j)\in E} s_i s_j$.
The energy is minimized by configurations that
maximize the number of edges $(i,j) \in E$ with
$s_i s_j = -1$, i.e.\ solutions of the
\emph{Max-Cut} problem on $\Lambda$,
\begin{equation}
\label{eq:maxcut}
    \mathrm{MaxCut}(\Lambda)
    = \max_{\mathbf{s}\in\{\pm1\}^N}
    \bigl|\{(i,j)\in E : s_i s_j = -1\}\bigr|.
\end{equation}
The ground state energy is
\begin{equation}
\label{eq:E0}
    E_0(\Lambda) = |E| - 2\,\mathrm{MaxCut}(\Lambda),
\end{equation}
and the ground space is
\begin{equation}
\label{eq:ground_space}
    \mathcal{G}(\Lambda)
    = \mathrm{span}\bigl\{|\mathbf{s}\rangle :
    \mathbf{s}\ \text{achieves}\
    \mathrm{MaxCut}(\Lambda)\bigr\},
\end{equation}
with degeneracy $d$. The two examples are chosen
to illustrate complementary regimes: $C_n$ ($n$
even), where Proposition~\ref{prop:general_bound}
is tight and Theorem~\ref{thm:main} gives no
improvement; and $K_n$ ($n$ even), where
Theorem~\ref{thm:main} is exponentially tighter
than Proposition~\ref{prop:general_bound} and
nearly matches the exact value

\subsection{Cycle graph $C_n$ ($n$ even)}

The cycle graph $C_n$ has $N = n$ vertices
$\{1,2,\ldots,n\}$ and edges
$\{(i,\,i+1 \bmod n)\}$, with automorphism
group $G = \mathrm{Aut}(C_n) = D_n$ of order
$2n$. The max-cut configurations are the two
N\'{e}el states
\begin{equation}
\label{eq:neel}
    |\mathbf{s}_+\rangle
    = |\uparrow\downarrow\uparrow\downarrow
    \cdots\uparrow\downarrow\rangle,
    \qquad
    |\mathbf{s}_-\rangle
    = |\downarrow\uparrow\downarrow\uparrow
    \cdots\downarrow\uparrow\rangle,
\end{equation}
so $d = 2$. The rotation $\rho: i \mapsto
i+1 \bmod n$ maps $|\mathbf{s}_+\rangle
\mapsto |\mathbf{s}_-\rangle$, so both states
lie in the same $G$-orbit, giving $\omega = 1$.
The symmetrized physical ground state is
\begin{equation}
\label{eq:Cn_gs}
    |A_1\rangle
    = \frac{1}{\sqrt{2}}
    \bigl(|\mathbf{s}_+\rangle
    + |\mathbf{s}_-\rangle\bigr).
\end{equation}

Fix the balanced bipartition into two
consecutive half-chains
$A = \{1,\ldots,n/2\}$,
$B = \{n/2+1,\ldots,n\}$.
The two N\'{e}el states factorize as
$|\mathbf{s}_\pm\rangle =
|\alpha_\pm\rangle_A \otimes
|\beta_\pm\rangle_B$, where
$|\alpha_-\rangle_A$ is the spin-flip of
$|\alpha_+\rangle_A$ and similarly for $B$,
so $\langle\alpha_+|\alpha_-\rangle_A = 0$
and $\langle\beta_+|\beta_-\rangle_B = 0$.
In the basis
$\{|\alpha_+\rangle_A,|\alpha_-\rangle_A\}
\times
\{|\beta_+\rangle_B,|\beta_-\rangle_B\}$,
the coefficient matrix is
\begin{equation}
\label{eq:Cn_M}
    M = \frac{1}{\sqrt{2}}
    \begin{pmatrix} 1 & 0 \\ 0 & 1
    \end{pmatrix},
\end{equation}
with Schmidt coefficients
$\lambda_1 = \lambda_2 = 1/2$ and exact
entanglement entropy
\begin{equation}
\label{eq:Cn_exact}
    S_{N/2}(C_n) = \log 2,
    \qquad n\ \text{even},
\end{equation}
independent of $N$. Since each basis vector
$|\mathbf{s}_\pm\rangle$ has Schmidt rank
$r_k = 1$ across this bipartition,
Proposition~\ref{prop:general_bound} gives
$S_{N/2}(C_n) \leq \log(r_1 + r_2) = \log 2$,
which is tight for all $N$.

For Theorem~\ref{thm:main}, the stabilizer
$\Gamma_A$ of the consecutive half-chain
bipartition is small, of order at most $2$, for all even $n$, since $D_n$ has very
few elements preserving the block
$\{1,\ldots,n/2\}$. Consequently the irrep
decomposition of $\mathcal{H}_A$ under
$\Gamma_A$ is nearly trivial, and
Theorem~\ref{thm:main} yields a bound that
grows as $O(2^{n/2})$, exponentially weaker
than the exact value $\log 2$ for large $N$.

This counterexample illustrates a general
principle: the automorphism bound of
Theorem~\ref{thm:main} is non-trivial only
when $|\Gamma_A|$ grows with $N$, which
requires a bipartition well-adapted to the
graph symmetry. The consecutive half-chain
bipartition of $C_n$ is precisely the wrong
choice in this regard. The complete graph
$K_n$ in the next subsection is the
contrasting case: its natural balanced
bipartition has $|\Gamma_A| = ((n/2)!)^2$,
growing super-exponentially with $N$, and
Theorem~\ref{thm:main} gives an
exponentially tighter bound than
Proposition~\ref{prop:general_bound}.

\subsection{Complete graph $K_n$ ($n$ even)}

The complete graph $K_n$ has $N = n$ vertices
$\{1,\ldots,n\}$ and all $\binom{n}{2}$ edges
present. Its automorphism group is
$G = \mathrm{Aut}(K_n) = S_n$, the full
symmetric group of order $n!$, since every
vertex permutation preserves the complete
edge set.

\subsubsection{Ground state}

The antiferromagnetic Ising Hamiltonian on
$K_n$ assigns energy $-1$ to each
antiparallel edge and $+1$ to each parallel
edge. The energy of a configuration
$\mathbf{s}$ with $k$ up-spins is
$E(\mathbf{s}) = \binom{n}{2} - 2k(n-k)$,
minimized at $k = n/2$. The ground space is
therefore spanned by all $\binom{n}{n/2}$
configurations with exactly $n/2$ up-spins,
giving degeneracy $d = \binom{n}{n/2}$.
Under $G = S_n$, all these configurations
form a single orbit, so $\omega = 1$.
By symmetrized physical ground state is the
orbit-averaged state
\begin{equation}
\label{eq:Kn_gs}
    |A_1\rangle
    = \frac{1}{\sqrt{\binom{n}{n/2}}}
    \sum_{\mathbf{s}:\,\sum_i s_i = 0}
    |\mathbf{s}\rangle.
\end{equation}

\subsubsection{Exact entanglement entropy}

Fix the balanced bipartition
$A = \{1,\ldots,n/2\}$,
$B = \{n/2+1,\ldots,n\}$, and set $m = n/2$.
For a zero-magnetization configuration
$\mathbf{s}$, if exactly $j$ spins in $A$
are up then exactly $m - j$ spins in $B$
are up, for $j = 0,1,\ldots,m$.
The coefficient matrix has entries
\begin{equation}
\label{eq:Kn_M}
    M_{\alpha\beta}
    = \frac{1}{\sqrt{\binom{n}{m}}}\,
    \mathbf{1}[|\alpha|+|\beta|=m],
\end{equation}
where $|\alpha|$ denotes the number of
up-spins in $\alpha$ and $\mathbf{1}[\cdot]$
is the indicator function. The matrix $M$
is block diagonal in the magnetization
sector $j$ of $A$: within sector $j$,
$\alpha$ ranges over $A$-configurations
with $|\alpha| = j$ (count:
$\binom{m}{j}$) and $\beta$ ranges over
$B$-configurations with $|\beta| = m-j$
(count: $\binom{m}{m-j} = \binom{m}{j}$,
by symmetry of binomial coefficients).
The sector-$j$ block is the
$\binom{m}{j} \times \binom{m}{j}$
constant matrix
\begin{equation}
\label{eq:Kn_block}
    M^{(j)}
    = \frac{1}{\sqrt{\binom{n}{m}}}
    \,\mathbf{J}^{(j)},
\end{equation}
where $\mathbf{J}^{(j)}$ is the
$\binom{m}{j}\times\binom{m}{j}$
all-ones matrix. The all-ones matrix of
size $p \times p$ has a single nonzero
singular value equal to $p$, so the unique
nonzero singular value of $M^{(j)}$ is
\begin{equation}
    \sigma_j
    = \frac{\binom{m}{j}}
    {\sqrt{\binom{n}{m}}},
\end{equation}
giving squared Schmidt coefficient
\begin{equation}
\label{eq:Kn_lambda}
    \lambda_j
    = \frac{\binom{m}{j}^2}
    {\binom{n}{m}},
    \qquad j = 0,1,\ldots,m.
\end{equation}
One verifies $\sum_{j=0}^m \lambda_j = 1$
by the Vandermonde
identity~\cite{askey_book}
$\sum_{j=0}^m \binom{m}{j}^2
= \binom{n}{m}$.
There are exactly $m+1 = n/2+1$ nonzero
Schmidt coefficients, one per magnetization
sector, and the exact entanglement entropy is
\begin{equation}
\label{eq:Kn_exact}
    S_{N/2}(K_n)
    = -\sum_{j=0}^{m}
    \frac{\binom{m}{j}^2}{\binom{n}{m}}
    \log
    \frac{\binom{m}{j}^2}{\binom{n}{m}}.
\end{equation}

For large $n$, applying Stirling's
approximation to $\binom{m}{j}$ gives
\begin{equation}
    \lambda_j
    \approx
    \sqrt{\frac{4}{\pi n}}\,
    \exp\!\left(
    -\frac{4(j-n/4)^2}{n}
    \right),
\end{equation}
a Gaussian distribution in $j$ with mean
$n/4$ and variance $n/8$. By the standard
result that the entropy of a discrete
distribution converging to a Gaussian with
variance $\sigma^2$ approaches the
differential entropy $\frac{1}{2}\log
(2\pi e\sigma^2)$ with error
$O(1/\sigma^2)$~\cite{thomas_cover},
\begin{equation}
\label{eq:Kn_asymp}
    S_{N/2}(K_n)
    = \frac{1}{2}\log n
    + \frac{1}{2}\log\frac{\pi e}{4}
    + O(1/n).
\end{equation}
The leading behavior is
$S_{N/2}(K_n) \sim \frac{1}{2}\log n$.

\subsubsection{Automorphism bound}

We now compute
$\sum_\mu d_\mu \min(m_\mu^A, m_\mu^B)$
for the balanced bipartition
$A = \{1,\ldots,m\}$,
$B = \{m+1,\ldots,n\}$ with $m = n/2$.
The stabilizer is
\begin{equation}
    \Gamma_A
    = \{g \in S_n : g(A) = A\}
    = S_m^A \times S_m^B,
\end{equation}
where $S_m^A$ permutes the $m$ sites within
$A$ and $S_m^B$ permutes the $m$ sites
within $B$, independently, with
$|\Gamma_A| = (m!)^2 = ((n/2)!)^2$.
A general element acts as
$P_A(g_A, g_B)|\alpha\rangle_A
= |g_A(\alpha)\rangle_A$ and
$P_B(g_A, g_B)|\beta\rangle_B
= |g_B(\beta)\rangle_B$.

The irreps of $\Gamma_A = S_m^A \times S_m^B$
are labeled by pairs $(\mu_A, \mu_B)$ of
irreps of $S_m$. We now determine which
irreps contribute to
$\min(m_{(\mu_A,\mu_B)}^A,
m_{(\mu_A,\mu_B)}^B)$.

Since $B$-sites are absent from
$\mathcal{H}_A$, the operator
$P_A(e, g_B) = I_{\mathcal{H}_A}$ for all
$g_B \in S_m^B$. Therefore $\mathcal{H}_A$
carries only the trivial irrep of $S_m^B$,
giving $m_{(\mu_A,\mu_B)}^A = 0$ for all
$\mu_B \neq \mathrm{triv}$. By the
symmetric argument, $\mathcal{H}_B$ carries
only the trivial irrep of $S_m^A$, giving
$m_{(\mu_A,\mu_B)}^B = 0$ for all
$\mu_A \neq \mathrm{triv}$. Therefore
\begin{equation}
    \min(m_{(\mu_A,\mu_B)}^A,\,
    m_{(\mu_A,\mu_B)}^B) = 0
    \qquad \forall\,
    (\mu_A,\mu_B)
    \neq (\mathrm{triv},\mathrm{triv}),
\end{equation}
and the sum in Theorem~\ref{thm:main}
reduces to the single term
$(\mu_A,\mu_B) = (\mathrm{triv},\mathrm{triv})$,
with $d_{(\mathrm{triv},\mathrm{triv})} = 1$.

The multiplicity
$m_{(\mathrm{triv},\mathrm{triv})}^A$
counts the independent states in
$\mathcal{H}_A$ that are fully symmetric
under all permutations of the $m$ sites
of $A$ by $S_m^A$. These are the states
whose amplitude depends only on the number
of up-spins $j$, spanned by
\begin{equation}
    |j\rangle_A
    = \frac{1}{\sqrt{\binom{m}{j}}}
    \sum_{\alpha:\,|\alpha|=j}
    |\alpha\rangle_A,
    \qquad j = 0,1,\ldots,m,
\end{equation}
giving $m_{(\mathrm{triv},\mathrm{triv})}^A
= m + 1 = n/2 + 1$, and by the same
argument $m_{(\mathrm{triv},\mathrm{triv})}^B
= n/2 + 1$. Therefore
\begin{equation}
    \sum_\mu d_\mu
    \min(m_\mu^A,\, m_\mu^B)
    = \min\!\left(
    \frac{n}{2}+1,\,\frac{n}{2}+1
    \right)
    = \frac{n}{2}+1,
\end{equation}
and Theorem~\ref{thm:main} gives
\begin{equation}
\label{eq:Kn_new_bound}
    S_{N/2}(K_n)
    \leq \log\!\left(\frac{n}{2}+1\right).
\end{equation}

\subsubsection{Comparison of bounds}

The bounds are compared in thermodynamic limit $n \to \infty$. The 
Hilbert space bound (Prop.~\ref{prop:general_bound}) yields
    $ S_{N/2}(K_n)
    \leq \tfrac{n}{2}\log 2$; the automorphism bound
    (Thm.~\ref{thm:main}) yields $ 
    S_{N/2}(K_n)
    \leq \log \left(
    \tfrac{n}{2}+1\right)
    \sim \log n $, and 
    the exact value yields
    $ S_{N/2}(K_n)
    \sim \tfrac{1}{2}\log n $. Therefore the automorphism bound reduces the Hilbert space bound from exponential to logarithmic in system size, which an exponential
improvement. The bound $\log n$ is
tight up to a factor of $2$ in the
argument of the logarithm compared to
the exact value $\frac{1}{2}\log n$;
equivalently, the automorphism bound
overshoots the exact entropy by at most
an additive constant $\log 2$ asymptotically.

\subsection{Complete bipartite graph $K_{3,3}$, 
XY model}
\label{subsec:K33}

We now consider the complete bipartite graph $K_{3,3}$ has
$N = 6$ vertices partitioned into two sets
$A = \{1,2,3\}$ and $B = \{4,5,6\}$, with
all $3\times 3 = 9$ edges between $A$ and $B$.
Its automorphism group is
$G = \mathrm{Aut}(K_{3,3})
= (S_3^A \times S_3^B) \times \mathbb{Z}_2$,
of order 72, where $\mathbb{Z}_2$ swaps $A$
and $B$. The natural balanced bipartition
$A|B$ coincides with the graph bipartition,
giving $|\Gamma_A| = 36$.
This is a non-abelian group, with irreps of
dimension up to 4. The full computation of
the irrep dimensions $d_\mu$ and
multiplicities $m_\mu^A$, $m_\mu^B$ is
given in Appendix~\ref{app:K33}; the result
is
\begin{equation}
\label{eq:K33_bound_sum}
    \sum_\mu d_\mu
    \min(m_\mu^A, m_\mu^B) = 4.
\end{equation}
Theorem~\ref{thm:main} therefore gives
\begin{equation}
\label{eq:K33_thm1}
    S_{N/2}(K_{3,3}^{XY})
    \leq \log 4 = 2\log 2.
\end{equation}

\subsubsection{XY model ground state}

We consider the XY
Hamiltonian
\begin{equation}
\label{eq:XY_ham}
    H_{XY}(\Lambda)
    = J\sum_{(i,j)\in E}
    \bigl(\sigma_i^x\sigma_j^x
    + \sigma_i^y\sigma_j^y\bigr)
    ,
\end{equation}
with $J > 0$. This Hamiltonian conserves
total $S^z = \sum_i \sigma_i^z/2$ and
commutes with all $U(g)$ for $g \in G$,
since $K_{3,3}$ is vertex-transitive within
each part. We work in the $S^z = 0$ sector,
which has $\binom{6}{3} = 20$ basis states.
Since $[H_{XY}, U(g)] = 0$ for all $g \in G$,
the ground state lies in a definite irrep
sector of $G$. The ground state is
non-degenerate  with a nonzero 
spectral gap.

\subsubsection{Exact entanglement entropy}

 diagonalization of $H_{XY}$ in the
$S^z = 0$ sector gives ground state energy
$E_0 \approx -2.802776|J|$. The coefficient matrix
$M$ has rank 4, with Schmidt coefficients
\begin{equation}
\label{eq:K33_schmidt}
    \lambda_1 = \lambda_2 \approx 0.3887,
    \qquad
    \lambda_3 = \lambda_4 \approx 0.1113.
\end{equation}
The two-fold degeneracy of the Schmidt
coefficients is a direct consequence of the
$S_3^A \times S_3^B$ symmetry: within the
$(\mathrm{triv},\mathrm{triv})$ block,
$\Phi_{(\mathrm{triv},\mathrm{triv})}$ has
two pairs of equal singular values
corresponding to the two orbits of
$\Gamma_A$ on the magnetization sectors.
The exact entanglement entropy is
\begin{equation}
\label{eq:K33_exact}
    S_{N/2}(K_{3,3}^{XY})
    \approx 1.7650\log 2.
\end{equation}

\subsubsection{Comparison of bounds}

The bounds are summarized as follows. Proposition~\ref{prop:general_bound} yields $S_{N/2} \leq \log r_1 = 2\log 2$,
Theorem~\ref{thm:main} yields $S_{N/2} \leq \log 4
    = 2\log 2$, and the exact value is $S_{N/2} = 1.7650\log 2$.
Both bounds coincide at $\log 4 = 2\log 2$
and the exact entropy is $1.7650\log 2$. The key distinction is that
Theorem~\ref{thm:main} derives the bound
$\log 4$ purely from the representation
theory of $\Gamma_A = S_3^A \times S_3^B$,
requiring no microscopic knowledge of the
ground state, while
Proposition~\ref{prop:general_bound}
requires explicit computation of
$\mathrm{rank}(M)$.

\section{Asymmetric graphs}
\label{sec:asymmetric}

A graph $\Lambda$ is called \emph{asymmetric}
if its automorphism group is trivial,
$G = \mathrm{Aut}(\Lambda) = \{e\}$.
Almost all large graphs are asymmetric:
the proportion of graphs on $N$ vertices
with a nontrivial automorphism tends to
zero as $N \to \infty$~\cite{erdos_1963,
Cameron1999}. The large-$N$ limit therefore
generically corresponds to asymmetric
graphs, for which Theorem~\ref{thm:main}
is inapplicable, as the subgroup 
$\Gamma_A = \{e\}$ yields only the trivial
irrep, and the bound reduces to the Hilbert
space bound $S_{N/2} \leq \frac{N}{2}\log 2$.
In this regime, Proposition~\ref{prop:general_bound}
is the only available bound, and its
content depends entirely on the microscopic
structure of the ground states.

\subsection{Classical Hamiltonians}

A Hamiltonian is defined as \emph{classical} if it is
diagonal in the computational basis, so
that every eigenstate is a computational
basis state. Each ground state basis vector
$|g_k\rangle$ is then a product state
$|\alpha_k\rangle_A \otimes |\beta_k\rangle_B$,
and its coefficient matrix satisfies
\begin{equation}
\label{eq:classical_Mk}
    M^{(k)}_{\alpha\beta}
    = \langle\alpha,\beta|g_k\rangle
    = \begin{cases}
    1 & \text{if } |\alpha\rangle_A
    \otimes|\beta\rangle_B = |g_k\rangle, \\
    0 & \text{otherwise.}
    \end{cases}
\end{equation}
Since $M^{(k)}$ has exactly one nonzero
entry, $r_k = \mathrm{rank}(M^{(k)}) = 1$
for all $k$. Proposition~\ref{prop:general_bound}
then gives
\begin{equation}
\label{eq:classical_bound}
    S_{N/2}(|\psi\rangle) \leq \log d,
\end{equation}
where $d$ is the ground state degeneracy.
For classical Hamiltonians, the entanglement
entropy is therefore bounded purely by $d$,
with no reference to the Hilbert space
dimension.

For large $N$, the degeneracy grows as
$\log d \sim N s_0$, where
\begin{equation}
    s_0 = \lim_{N\to\infty}
    \frac{\log d}{N}
\end{equation}
is the \emph{residual entropy per
spin}~\cite{Baxter1982}. The bound
$\log d$ is sub-extensive and hence
strictly tighter than the Hilbert space
bound, if and only if $s_0 
\frac{1}{2}\log 2$, i.e.\ $d < 2^{N/2}$
asymptotically. This separates two regimes.
(a) \emph{Sub-critical regime:}
($s_0 < \frac{1}{2}\log 2$, equivalently
$d < 2^{N/2}$): the rank subadditivity
bound dominates,
\begin{equation}
    S_{N/2}(|\psi\rangle)
    \leq \log d = Ns_0
    < \tfrac{N}{2}\log 2.
\end{equation}
The bound is sub-volume-law and strictly
tighter than the Hilbert space bound.
(b) \emph{Super-critical regime}
($s_0 \geq \frac{1}{2}\log 2$, equivalently
$d \geq 2^{N/2}$): the Hilbert space
bound dominates,
\begin{equation}
    S_{N/2}(|\psi\rangle)
    \leq \tfrac{N}{2}\log 2,
\end{equation}
and Proposition~\ref{prop:general_bound}
reduces to the trivial bound.

\subsection{Quantum Hamiltonians}

For generic quantum Hamiltonians on
asymmetric graphs, neither an exact
computation nor a non-trivial analytical
bound on $S_{N/2}$ is available. The
absence of automorphism symmetry precludes
both exact solution by symmetry methods
and the rank reduction of
Theorem~\ref{thm:main}. The quantity
$\sum_k r_k$ in
Proposition~\ref{prop:general_bound}
requires full microscopic knowledge of
the ground states, which is generically
inaccessible.

Recent results in quantum complexity theory
establish that estimating ground state
entanglement for generic local Hamiltonians
is computationally intractable because (a) Estimating the entanglement of a
local Hamiltonian ground state is
QCMA-hard~\cite{gharibian2024};
(b) Determining whether the ground state
of a geometrically local, polynomially
gapped Hamiltonian has near-area-law
versus near-volume-law entanglement is
LWE-hard in 2D and factoring-hard in
1D~\cite{bouland2024}; (c)Detecting a low-entanglement ground
state is QMA(2)-hard~\cite{weggemans_2025}.
These results confirm that any non-trivial
analytical bound on the entanglement entropy for generic
quantum Hamiltonians on asymmetric graphs
would imply progress on problems at least
as hard as QCMA. For specific families
of quantum Hamiltonians on asymmetric
graphs, such as those with additional
conserved quantities, frustration-free
structure, or stoquastic sign
structure~\cite{bravyi_2008},  non-trivial
bounds may be obtainable via methods
outside the automorphism framework of this
paper. We leave this as an open direction.

\section{Discussion and Conclusion}
\label{sec:conclusion}

\subsection{Summary of results}

We have established an upper
bounds on the balanced-cut von Neumann
entropy for ground
states of graph Hamiltonians with
automorphism symmetry. The general bound,
Proposition~\ref{prop:general_bound},
depends on the Schmidt ranks of the
ground space basis vectors. This bound
requires microscopic knowledge of the
ground states. The main result,
Theorem~\ref{thm:main}, replaces this
with a bound determined entirely by the
representation theory of the  automorphism group. Additionally, 
Theorem~\ref{thm:mixed_bound}
establishes that the symmetrized mixed ground
state saturates
the bound constraint by the Hilbert space dimension of the subsystem. The examples in Section~\ref{sec:examples}
illustrate two complementary regimes.
For $C_n$ with the consecutive half-chain
bipartition, $\Gamma_A$ is small and
Theorem~\ref{thm:main} gives no
improvement over the Hilbert space bound;
the non-trivial bound comes instead from
Proposition~\ref{prop:general_bound},
which is tight. For $K_n$,
Theorem~\ref{thm:main} reduces an
exponentially large Hilbert space bound
$\frac{n}{2}\log 2$ to the logarithmic
bound $\log(n/2+1)$, tight up to an
additive constant $\log 2$ compared to
the exact value $\frac{1}{2}\log n$.

The computational cost of the
automorphism-induced bound is determined
by the cost of computing $G =
\mathrm{Aut}(\Lambda)$ and decomposing
its permutation representation. For a
general graph, the automorphism group
can be computed in quasipolynomial time
$O(N^{c\log N})$ via Babai's
algorithm~\cite{Babai2016}, which is
subexponential compared to the
exponential cost of exact entanglement
entropy computation via Schmidt
decomposition. For physically relevant
lattice graphs, the automorphism group
and its representation theory are known
analytically, so the bound is computable
without additional algorithmic cost. For a general graph, computing the bound requires
determining $G = \mathrm{Aut}(\Lambda)$, that is  achievable in
quasipolynomial time via Babai's algorithm~\cite{Babai2016};
the stabilizer $\Gamma_A$, its character table via Dixon's
algorithm~\cite{Dixon1967} in $O(k^3)$ operations, where $k$
is the number of conjugacy classes; and the permutation
character multiplicities in $O(k \cdot 2^{N/2})$ operations.
The total cost is subexponential in $N$ compared to the
exponential cost $O(2^N)$ of exact entanglement entropy
computation.

\subsection{Open problems and future
directions}

\noindent(a) The bound of Theorem~\ref{thm:main}
establishes a partial ordering on , that is,
a larger stabilizer $\Gamma_A$ yields a
smaller value of $\sum_\mu d_\mu
\min(m_\mu^A, m_\mu^B)$ and hence a
tighter upper bound on $S_{N/2}$.
One is naturally led to ask whether this
ordering of bounds reflects a genuine
ordering of actual entropies.
The answer to this problem is beyond the framework
of the present paper for two reasons.
First, it would require the bound to be
tight for all graphs in some natural
class, so that the bound faithfully
approximates the actual entropy. Second,
even if the bounds are ordered, the
actual entropies could in principle be
ordered oppositely: a highly symmetric
graph with a small bound could have an
entropy saturating a large fraction of
that bound, while a less symmetric graph
with a larger bound has an entropy well
below it. The $K_n$ example provides
partial evidence for the conjecture: the
bound $\log(n/2+1)$ is tight up to
$\log 2$, and more symmetric graphs do
appear to have more constrained
entanglement. Resolving this
conjecture would
require either proving tightness of the
bound for a natural graph class or
developing direct methods for large-scale
entanglement entropy
computation~\cite{vidal_2003,
schollwock2011}.\\

\noindent (b) A natural extension is to mixed states
and finite-temperature ensembles. Since
$[H(\Lambda), U(g)] = 0$ for all
$g \in G$, the Gibbs state
$\rho_\beta \propto e^{-\beta H(\Lambda)}$
is $G$-invariant at all $\beta$, and the
representation-theoretic structure of
$\mathcal{H}_A$ under $\Gamma_A$ persists
at finite temperature. Analogous bounds
on thermal entanglement may be derivable
by similar arguments, building on recent
results showing that thermal entanglement
in spin chains is strictly
finite~\cite{bakshi_2026} and that
thermal graph states carry bound
entanglement~\cite{toth_2010}.\\

\noindent(c) The framework assumes an unweighted
undirected graph, so that
$\mathrm{Aut}(\Lambda)$ is a subgroup of
$S_N$. Extending the analysis to weighted
graphs, directed graphs, and hypergraphs
would broaden the reach of the method
to a wider class of many-body systems,
including those relevant to quantum
chemistry models~\cite{balasubramanian_1980, balasubramanian_1995}.\\

\noindent (d) In quantum computing architectures where
qubits are arranged on a hardware graph,
such as superconducting processors with
heavy-hexagonal connectivity~\cite{nation_2025,kam_2024}
and trapped-ion arrays, the graph topology
directly influences the symmetry structure
of the system and hence the entanglement
properties of the native Hamiltonian.
Large-scale entanglement in graph-based
quantum processors has recently been
characterized experimentally on
superconducting devices~\cite{kam_2024,
karamlou_2024}. Our results
provide a complementary analytical
perspective, relating achievable
ground-state entanglement to the
automorphism structure of the underlying
interaction graph. More generally, hybrid quantum systems
based on spin defects, such as NV centers
in diamond coupled into controllable
quantum networks~\cite{
pompili_2021,wehner_2018}, offer platforms
where the interaction graph can be
engineered directly. In such systems,
modifying the connectivity structure and
its associated symmetries may provide a
route for tuning and controlling
ground-state entanglement.\\

\noindent\textbf{Data availability.}
The data supporting the findings of this study are available from the author upon reasonable request.

\bigskip

\onecolumn

\clearpage
\setcounter{equation}{0}
\renewcommand{\theequation}{A\arabic{equation}}

\setcounter{section}{0}

\begin{center}
    {\textbf{APPENDIX A}} \\[0.5cm]
\end{center}

\section{Representation theory of $\Gamma_A$
for $K_{3,3}$}\label{secA1}
\label{app:K33}

We compute the bound $\log\sum_\mu d_\mu\min(m_\mu^A,
m_\mu^B)$ for $K_{3,3}$ from first principles, starting
from the matrix representation of the group elements.
Every step is derived explicitly so that the computation
can be reproduced without additional references.

\subsection*{A.1 The graph and its stabilizer}

The complete bipartite graph $K_{3,3}$ has vertex set
$V = A\sqcup B$ with $A = \{1,2,3\}$, $B = \{4,5,6\}$
and all 9 edges between $A$ and $B$. The automorphism
group is $G = \mathrm{Aut}(K_{3,3})$, of order 72.
The stabilizer of the bipartition is
\begin{equation}
    \Gamma_A = \{g\in G : g(A)=A\}
    = S_3^A\times S_3^B,
    \qquad |\Gamma_A| = 36,
\end{equation}
since any permutation of $A$-sites and any permutation
of $B$-sites independently preserves all edges of
$K_{3,3}$. The 6 elements of $S_3^A$ are
\begin{equation}
    S_3^A = \{e,(12),(13),(23),(123),(132)\},
\end{equation}
where $(ij)$ denotes transposition of sites $i$ and $j$,
and $(123)$ denotes the 3-cycle $1\to2\to3\to1$.
Similarly for $S_3^B$ acting on $\{4,5,6\}$.

A general element of $\Gamma_A$ is a pair
$g=(g_A,g_B)$ with $g_A\in S_3^A$, $g_B\in S_3^B$,
acting as
\begin{equation}
    P_A(g_A,g_B)|\alpha\rangle_A
    = |g_A(\alpha)\rangle_A,
    \qquad
    P_B(g_A,g_B)|\beta\rangle_B
    = |g_B(\beta)\rangle_B.
\end{equation}

\subsection*{A.2 Matrix representation of $S_3^A$ 
on $\mathbb{R}^3$}

Each element $g_A\in S_3^A$ permutes the three sites
of $A$. We represent this as a linear transformation
on $\mathbb{R}^3$ with standard basis vectors
$e_1=(1,0,0)$, $e_2=(0,1,0)$, $e_3=(0,0,1)$, where
$e_i$ represents site $i$. The action is:
\begin{equation}
    g_A\cdot e_i = e_{g_A(i)}.
\end{equation}
The matrix of $g_A$ has entry 1 in row $g_A(j)$,
column $j$, and 0 elsewhere. Therefore

\begin{equation}
P(e) = \begin{pmatrix}1&0&0\\0&1&0\\0&0&1\end{pmatrix},
\quad
P((12)) = \begin{pmatrix}0&1&0\\1&0&0\\0&0&1\end{pmatrix},
\quad
P((13)) = \begin{pmatrix}0&0&1\\0&1&0\\1&0&0\end{pmatrix},
\end{equation}
\begin{equation}
P((23)) = \begin{pmatrix}1&0&0\\0&0&1\\0&1&0\end{pmatrix},
\quad
P((123)) = \begin{pmatrix}0&0&1\\1&0&0\\0&1&0\end{pmatrix},
\quad
P((132)) = \begin{pmatrix}0&1&0\\0&0&1\\1&0&0\end{pmatrix}.
\end{equation}

This 3-dimensional representation is called the
\emph{permutation representation} of $S_3^A$ on
$\mathbb{R}^3$.

\subsection*{A.3 Decomposing the permutation 
representation}

The permutation representation is \emph{reducible}
the vector $e_1+e_2+e_3 = (1,1,1)$ is fixed by every
permutation,
\begin{equation}
    g_A\cdot(e_1+e_2+e_3)
    = e_{g_A(1)}+e_{g_A(2)}+e_{g_A(3)}
    = e_1+e_2+e_3,
\end{equation}
so it spans a 1-dimensional invariant subspace
$V_{\mathrm{triv}} = \mathrm{span}\{(1,1,1)\}$.

By Maschke's theorem, $\mathbb{R}^3$ decomposes as
\begin{equation}
    \mathbb{R}^3 = V_{\mathrm{triv}}\oplus V_{\mathrm{std}},
\end{equation}
where $V_{\mathrm{std}}$ is the orthogonal complement
of $V_{\mathrm{triv}}$
\begin{equation}
    V_{\mathrm{std}}
    = \{(x_1,x_2,x_3)\in\mathbb{R}^3
    : x_1+x_2+x_3=0\}.
\end{equation}
The condition $x_1+x_2+x_3=0$ is preserved by any
permutation of coordinates
\begin{equation}
    g_A\cdot(x_1,x_2,x_3)
    = (x_{g_A^{-1}(1)},x_{g_A^{-1}(2)},
    x_{g_A^{-1}(3)}),
\end{equation}
whose coordinate sum remains $x_1+x_2+x_3=0$.
So $V_{\mathrm{std}}$ is invariant under $S_3^A$.

$V_{\mathrm{std}}$ is 2-dimensional (one linear
constraint on $\mathbb{R}^3$), and it is irreducible
(contains no smaller nonzero invariant subspace).

\subsection*{A.4 Conjugacy classes and number of irreps}

The elements of $S_3$ fall into 3 conjugacy classes, groups of elements that are related by conjugation
$h g h^{-1}$
\begin{equation}
    C_1 = \{e\},\quad
    C_2 = \{(12),(13),(23)\},\quad
    C_3 = \{(123),(132)\},
\end{equation}
of sizes 1, 3, 2 respectively. A fundamental theorem
of representation theory states that the number of
irreducible representations equals the number of
conjugacy classes. Therefore $S_3$ has exactly
\textbf{3 irreducible representations}.

Their dimensions $d_1,d_2,d_3$ satisfy
\begin{equation}
    \sum_{\mu=1}^3 d_\mu^2 = |S_3| = 6,
\end{equation}
whose only solution in positive integers is
$d_1=1$, $d_2=1$, $d_3=2$. The two 1-dimensional
irreps act on $V_{\mathrm{triv}}$ and the 2-dimensional
irrep acts on $V_{\mathrm{std}}$.

\subsection*{A.5 The three irreps of $S_3$ explicitly}

\noindent\textbf{Irrep 1 ($d_1=1$):}
Every element maps to the $1\times1$ matrix $(1)$.
This is valid since $1\cdot1=1$ for all products.

\noindent\textbf{Irrep 2 ($d_2=1$):}
Every element maps to its sign:
$\rho_2(g) = (\mathrm{sgn}(g))$, where
$\mathrm{sgn}(g) = (-1)^k$ and $k$ is the number
of transpositions in any decomposition of $g$.
Since transpositions have $k=1$ (odd) and 3-cycles
have $k=2$ (even)
\begin{equation}
    \rho_2(e)=(1),\quad
    \rho_2((ij))=(-1),\quad
    \rho_2((ijk))=(1).
\end{equation}
This is valid since $\mathrm{sgn}(gh)
=\mathrm{sgn}(g)\mathrm{sgn}(h)$.

\noindent\textbf{Irrep 3 ($d_3=2$):}
The action of $S_3$ on $V_{\mathrm{std}}$.
We choose the orthonormal basis
\begin{equation}
    u_1 = \frac{1}{\sqrt{2}}(e_1-e_2)
    = \frac{1}{\sqrt{2}}(1,-1,0),\quad
    u_2 = \frac{1}{\sqrt{6}}(e_1+e_2-2e_3)
    = \frac{1}{\sqrt{6}}(1,1,-2).
\end{equation}
Both satisfy $x_1+x_2+x_3=0$:
$\frac{1}{\sqrt{2}}(1-1+0)=0$ and
$\frac{1}{\sqrt{6}}(1+1-2)=0$. 

The $2\times2$ matrix of $g_A$ in this basis is found
by computing $g_A\cdot u_1$ and $g_A\cdot u_2$, then
expressing the results in terms of $u_1$ and $u_2$.
The matrix entry $[\rho_3(g_A)]_{ij}$ is the
coefficient of $u_i$ in $g_A\cdot u_j$
\begin{equation}
    [\rho_3(g_A)]_{ij}
    = \langle u_i, g_A\cdot u_j\rangle
    = u_i^T\cdot(g_A\cdot u_j).
\end{equation}

We compute all 6 matrices:

\noindent Identity $e$
$e\cdot u_1=u_1$, $e\cdot u_2=u_2$, therefore
\begin{equation}
    \rho_3(e)
    = \begin{pmatrix}1&0\\0&1\end{pmatrix}.
\end{equation}

\noindent\textit{Transposition $(12)$
$e_1\leftrightarrow e_2$, $e_3$ fixed.}
\begin{align}
    (12)\cdot u_1
    &= \tfrac{1}{\sqrt{2}}(e_2-e_1)
    = -u_1,\\
    (12)\cdot u_2
    &= \tfrac{1}{\sqrt{6}}(e_2+e_1-2e_3)
    = u_2.
\end{align}
\begin{equation}
    \rho_3((12))
    = \begin{pmatrix}-1&0\\0&1\end{pmatrix}.
\end{equation}

\noindent\textit{Transposition $(13)$
$e_1\leftrightarrow e_3$, $e_2$ fixed.}
\begin{align}
    (13)\cdot u_1
    &= \tfrac{1}{\sqrt{2}}(e_3-e_2),\\
    (13)\cdot u_2
    &= \tfrac{1}{\sqrt{6}}(e_3+e_2-2e_1).
\end{align}
Computing matrix entries are
\begin{align}
    [\rho_3((13))]_{11}
    &= u_1^T\cdot\tfrac{1}{\sqrt{2}}(e_3-e_2)
    = \tfrac{1}{\sqrt{2}}\cdot\tfrac{1}{\sqrt{2}}
    (1,-1,0)\cdot(0,-1,1)
    = \tfrac{1}{2}(0+1+0)=\tfrac{1}{2},\\
    [\rho_3((13))]_{21}
    &= u_2^T\cdot\tfrac{1}{\sqrt{2}}(e_3-e_2)
    = \tfrac{1}{\sqrt{2}}\cdot\tfrac{1}{\sqrt{6}}
    (1,1,-2)\cdot(0,-1,1)
    = \tfrac{1}{\sqrt{12}}(0-1-2)
    = -\tfrac{\sqrt{3}}{2},\\
    [\rho_3((13))]_{12}
    &= u_1^T\cdot\tfrac{1}{\sqrt{6}}(e_3+e_2-2e_1)
    = \tfrac{1}{\sqrt{6}}\cdot\tfrac{1}{\sqrt{2}}
    (1,-1,0)\cdot(-2,1,1)
    = \tfrac{1}{\sqrt{12}}(-2-1+0)
    = -\tfrac{\sqrt{3}}{2},\\
    [\rho_3((13))]_{22}
    &= u_2^T\cdot\tfrac{1}{\sqrt{6}}(e_3+e_2-2e_1)
    = \tfrac{1}{\sqrt{6}}\cdot\tfrac{1}{\sqrt{6}}
    (1,1,-2)\cdot(-2,1,1)
    = \tfrac{1}{6}(-2+1-2)
    = -\tfrac{1}{2}.
\end{align}
\begin{equation}
    \rho_3((13))
    = \begin{pmatrix}
    \frac{1}{2}&-\frac{\sqrt{3}}{2}\\
    -\frac{\sqrt{3}}{2}&-\frac{1}{2}
    \end{pmatrix}.
\end{equation}

\noindent\textit{Transposition $(23)$
$e_2\leftrightarrow e_3$, $e_1$ fixed.}

By the same procedure
\begin{equation}
    \rho_3((23))
    = \begin{pmatrix}
    \frac{1}{2}&\frac{\sqrt{3}}{2}\\
    \frac{\sqrt{3}}{2}&-\frac{1}{2}
    \end{pmatrix}.
\end{equation}

\noindent\textit{3-cycle $(123)$:
$e_1\mapsto e_2\mapsto e_3\mapsto e_1$.}
\begin{align}
    (123)\cdot u_1
    &= \tfrac{1}{\sqrt{2}}(e_2-e_3),\\
    (123)\cdot u_2
    &= \tfrac{1}{\sqrt{6}}(e_2+e_3-2e_1).
\end{align}
Computing matrix entries
\begin{align}
    [\rho_3((123))]_{11}
    &= \tfrac{1}{\sqrt{2}}\cdot\tfrac{1}{\sqrt{2}}
    (1,-1,0)\cdot(0,1,-1)
    = \tfrac{1}{2}(0-1+0) = -\tfrac{1}{2},\\
    [\rho_3((123))]_{21}
    &= \tfrac{1}{\sqrt{2}}\cdot\tfrac{1}{\sqrt{6}}
    (1,1,-2)\cdot(0,1,-1)
    = \tfrac{1}{\sqrt{12}}(0+1+2) = \tfrac{\sqrt{3}}{2},\\
    [\rho_3((123))]_{12}
    &= \tfrac{1}{\sqrt{6}}\cdot\tfrac{1}{\sqrt{2}}
    (1,-1,0)\cdot(-2,1,1)
    = \tfrac{1}{\sqrt{12}}(-2-1+0) = -\tfrac{\sqrt{3}}{2},\\
    [\rho_3((123))]_{22}
    &= \tfrac{1}{\sqrt{6}}\cdot\tfrac{1}{\sqrt{6}}
    (1,1,-2)\cdot(-2,1,1)
    = \tfrac{1}{6}(-2+1-2) = -\tfrac{1}{2}.
\end{align}
\begin{equation}
    \rho_3((123))
    = \begin{pmatrix}
    -\frac{1}{2}&-\frac{\sqrt{3}}{2}\\
    \frac{\sqrt{3}}{2}&-\frac{1}{2}
    \end{pmatrix}.
\end{equation}

\noindent\textit{3-cycle $(132)$}
Since $(132)=(123)^{-1}$ and the representation
is unitary, $\rho_3((132))=\rho_3((123))^T$:
\begin{equation}
    \rho_3((132))
    = \begin{pmatrix}
    -\frac{1}{2}&\frac{\sqrt{3}}{2}\\
    -\frac{\sqrt{3}}{2}&-\frac{1}{2}
    \end{pmatrix}.
\end{equation}

\subsection*{A.6 Character table of $S_3$}

The character $\chi_\mu(g) = \mathrm{tr}(\rho_\mu(g))$
is the trace of the representation matrix. Since
conjugate elements always have the same trace
($\mathrm{tr}(ABA^{-1})=\mathrm{tr}(B)$), the
character is constant on each conjugacy class.

\noindent\textbf{Irrep 1:} $\chi_1(g) = \mathrm{tr}(1) = 1$
for all $g$.

\noindent\textbf{Irrep 2:} $\chi_2(g) = \mathrm{sgn}(g)$:
$+1$ for $e$ and 3-cycles, $-1$ for transpositions.

\noindent\textbf{Irrep 3:} $\chi_3(g) = \mathrm{tr}(\rho_3(g))$
\begin{align}
    \chi_3(e) &= 1+1 = 2,\\
    \chi_3((12)) &= -1+1 = 0,\quad
    \chi_3((13)) = \tfrac{1}{2}-\tfrac{1}{2}=0,\quad
    \chi_3((23))=\tfrac{1}{2}-\tfrac{1}{2}=0,\\
    \chi_3((123)) &= -\tfrac{1}{2}-\tfrac{1}{2}=-1,\quad
    \chi_3((132))=-\tfrac{1}{2}-\tfrac{1}{2}=-1.
\end{align}

The complete character table of $S_3$
\begin{equation}
\label{tab:S3_chars}
\begin{array}{c|ccc}
\text{Irrep} & C_1\ (1\text{ element})
& C_2\ (3\text{ elements})
& C_3\ (2\text{ elements})\\
\hline
\text{Irrep 1}\ (d=1) & 1 & 1 & 1\\
\text{Irrep 2}\ (d=1) & 1 & -1 & 1\\
\text{Irrep 3}\ (d=2) & 2 & 0 & -1
\end{array}
\end{equation}

\subsection*{A.7 Irreps of $\Gamma_A = S_3^A\times S_3^B$}

The irreps of a direct product group are all pairs
$(\mu_A,\mu_B)$ of irreps of the factors, with
\begin{equation}
    d_{(\mu_A,\mu_B)} = d_{\mu_A}\cdot d_{\mu_B},
    \qquad
    \chi_{(\mu_A,\mu_B)}(g_A,g_B)
    = \chi_{\mu_A}(g_A)\cdot\chi_{\mu_B}(g_B).
\end{equation}
There are $3\times3=9$ irreps of $\Gamma_A$, with
dimensions:
\begin{equation}
\begin{array}{c|ccc}
(\mu_A,\mu_B) & \mu_B=1 & \mu_B=2 & \mu_B=3\\
\hline
\mu_A=1 & 1 & 1 & 2\\
\mu_A=2 & 1 & 1 & 2\\
\mu_A=3 & 2 & 2 & 4
\end{array}
\end{equation}

\subsection*{A.8 Permutation characters on
$\mathcal{H}_A$ and $\mathcal{H}_B$}

$\mathcal{H}_A = (\mathbb{C}^2)^{\otimes 3}$
has dimension $2^3=8$. The permutation character
$\chi_{\mathrm{perm}}^A(g_A,g_B)$ counts
$A$-basis states $\alpha\in\{0,1\}^3$ fixed by
$(g_A,g_B)$. Since $g_B$ permutes $B$-sites which
are absent from $\mathcal{H}_A$, it acts as the
identity on $\mathcal{H}_A$:
\begin{equation}
    \chi_{\mathrm{perm}}^A(g_A,g_B)
    = \chi_{\mathrm{perm}}^A(g_A).
\end{equation}
A basis state $\alpha=(s_1,s_2,s_3)\in\{0,1\}^3$
is fixed by $g_A$ iff $s_{g_A(i)}=s_i$ for all $i$.

\noindent\textit{Class $C_1$ (identity):}
Every state is fixed.
$\chi_{\mathrm{perm}}^A(C_1)=8$.

\noindent\textit{Class $C_2$ (transpositions):}
Take $(12)$: state fixed iff $s_1=s_2$.
Fixed states: $(0,0,0),(0,0,1),(1,1,0),(1,1,1)$.
Count: 4.
$\chi_{\mathrm{perm}}^A(C_2)=4$.

\noindent\textit{Class $C_3$ (3-cycles)}
Take $(123)$: state fixed iff $s_1=s_2=s_3$.
Fixed states: $(0,0,0),(1,1,1)$.
Count: 2.
$\chi_{\mathrm{perm}}^A(C_3)=2$.

By symmetry of the balanced bipartition, the same
values hold for $\mathcal{H}_B$:
$\chi_{\mathrm{perm}}^B(C_1)=8$,
$\chi_{\mathrm{perm}}^B(C_2)=4$,
$\chi_{\mathrm{perm}}^B(C_3)=2$.

\subsection*{A.9 Multiplicities $m_{(\mu_A,\mu_B)}^A$}

The multiplicity of irrep $(\mu_A,\mu_B)$ of
$\Gamma_A$ in $\mathcal{H}_A$ is
\begin{equation}
    m_{(\mu_A,\mu_B)}^A
    = \frac{1}{|\Gamma_A|}
    \sum_{(g_A,g_B)\in\Gamma_A}
    \chi_{(\mu_A,\mu_B)}(g_A,g_B)^*\,
    \chi_{\mathrm{perm}}^A(g_A,g_B).
\end{equation}
Substituting the product form of the character
and using conjugacy class sizes and the character table, we obtain for $F_{\mu_A}\left(
    \sum_{g_A\in S_3^A}
    \chi_{\mu_A}(g_A)^*\,
    \chi_{\mathrm{perm}}^A(g_A)
    \right)$
\begin{align}
    F_1 &= 1\cdot\chi_1(C_1)\cdot8
    + 3\cdot\chi_1(C_2)\cdot4
    + 2\cdot\chi_1(C_3)\cdot2\nonumber\\
    &= 1\cdot1\cdot8+3\cdot1\cdot4+2\cdot1\cdot2
    = 8+12+4 = 24,\\
    F_2 &= 1\cdot1\cdot8+3\cdot(-1)\cdot4+2\cdot1\cdot2
    = 8-12+4 = 0,\\
    F_3 &= 1\cdot2\cdot8+3\cdot0\cdot4+2\cdot(-1)\cdot2
    = 16+0-4 = 12.
\end{align}

 Similarly for
$G_{\mu_B} = \sum_{g_B\in S_3^B}\chi_{\mu_B}(g_B)^*$
is the sum of the character over all group elements
\begin{align}
    G_1 &= 1\cdot1+3\cdot1+2\cdot1 = 6,\\
    G_2 &= 1\cdot1+3\cdot(-1)+2\cdot1 = 0,\\
    G_3 &= 1\cdot2+3\cdot0+2\cdot(-1) = 0.
\end{align}

\subsection*{A.11 The automorphism bound}

We now have all the ingredients for
Theorem~\ref{thm:main}. The complete table of
irrep dimensions and multiplicities is

\begin{table}[h]
\centering
\begin{tabular}{ccccc}
\hline
$(\mu_A,\mu_B)$ & $d_\mu$ &
$m_\mu^A$ & $m_\mu^B$ &
$d_\mu\min(m_\mu^A,m_\mu^B)$ \\
\hline
$(1,1)$ & 1 & 4 & 4 & 4 \\
$(1,2)$ & 1 & 0 & 0 & 0 \\
$(1,3)$ & 2 & 0 & 2 & 0 \\
$(2,1)$ & 1 & 0 & 0 & 0 \\
$(2,2)$ & 1 & 0 & 0 & 0 \\
$(2,3)$ & 2 & 0 & 0 & 0 \\
$(3,1)$ & 2 & 2 & 0 & 0 \\
$(3,2)$ & 2 & 0 & 0 & 0 \\
$(3,3)$ & 4 & 0 & 0 & 0 \\
\hline
Total & & & & 4 \\
\hline
\end{tabular}
\caption{Complete irrep data for $K_{3,3}$.
}
\label{tab:K33_full}
\end{table}

Therefore
\begin{equation}
    \sum_\mu d_\mu\min(m_\mu^A,m_\mu^B)
    = 1\times\min(4,4) = 4,
\end{equation}
and Theorem~\ref{thm:main} yields
\begin{equation}
    S_{N/2}(K_{3,3}^{XY})
    \leq \log 4 = 2\log 2.
\end{equation}

\bigskip
\end{document}